\documentclass[letterpaper, 10 pt, conference]{ieeeconf}

\IEEEoverridecommandlockouts                              

\usepackage{amsmath,amsfonts}
\usepackage{algorithmic}
\usepackage{algorithm}
\usepackage{array}
\usepackage{comment} 
\usepackage{cases}
\usepackage{graphicx}

\usepackage[caption=false,font=normalsize,labelfont=sf,textfont=sf]{subfig}
\usepackage{textcomp}
\let\labelindent\relax
\usepackage{enumitem}
\usepackage{xcolor}
\usepackage{stfloats}
\usepackage{url}
\usepackage{tcolorbox}
\usepackage{verbatim}
\usepackage{graphicx}
\usepackage{xcolor}
\usepackage{cite}
\usepackage{amsmath,amssymb,amsfonts}
\usepackage{array}
\usepackage{makecell}
\usepackage{amsthm}
\usepackage{comment}
\newtheorem{assumption}{Assumption}
\newtheorem{lemma}{Lemma}

\newtheorem{theorem}{Theorem}

\newcommand{\mouaad}[1]{\textcolor{magenta}{#1}}
\allowdisplaybreaks[4]

\begin{document}
\title{Velocity-free Attitude Synchronization Using Vector Measurements}
\author{Melika Afshari, Mouaad Boughellaba and Abdelhamid Tayebi
\thanks{This work was supported by the Natural Sciences and Engineering Research Council of Canada (NSERC), under the grant RGPIN-2026-06037.}
\thanks{The authors are with the Department of Electrical and Computer Engineering, Lakehead University, Thunder Bay, ON P7B 5E1, Canada. (e-mail: {\tt\small mafshari,mboughel,atayebi@lakeheadu.ca}).}
}
\maketitle 
\begin{abstract}
This paper addresses the velocity-free leaderless and leader--follower distributed attitude synchronization of a network of rigid body systems, with inertial vector measurements, under an undirected communication graph topology. In the leaderless case all agents achieve consensus on a common constant orientation. In the leader--follower case, all agents synchronize with a prescribed constant orientation available to only one agent. The proposed schemes require neither attitude reconstruction nor angular-velocity, avoid unwinding, and achieve almost-global asymptotic stability of the desired synchronization sets. Numerical simulations illustrate their effectiveness.
\end{abstract}

\section{Introduction}
Attitude synchronization is a fundamental coordination problem in networked multi-agent systems, with applications such as spacecraft formations and cooperative aerial vehicles~\cite{abdessameud2013motion}. Depending on the availability of a reference attitude, synchronization may be leaderless, in which the agents converge to a common but unspecified attitude, or leader--follower, in which they converge to a prescribed reference attitude. 
The formulation of attitude synchronization depends on the adopted attitude representation. Common representations include Euler angles (EA), modified Rodrigues parameters (MRP), rotation matrices, and unit quaternions~\cite{shuster1993survey}. The first two provide minimal three-parameter descriptions but are not globally nonsingular. Rotation matrices provide a global and unique representation on ${SO}(3)$ but require nine scalar entries, whereas unit quaternions provide a compact four-parameter and globally nonsingular representation with reduced computational burden. These properties make unit quaternions an attractive representation for attitude analysis. However, unit quaternions provide a double cover of $SO(3)$, which may give rise to the unwinding phenomenon~\cite{bhat2000topological,tayebi2008unit} if this ambiguity is not properly addressed.

Beyond attitude representation, practical implementation also depends on the available feedback information. Many attitude-control and synchronization schemes rely on both attitude and angular-velocity measurements, which increases the sensing and estimation requirements. To eliminate angular-velocity measurements,~\cite{tayebi2008unit} introduced a dynamic output-feedback scheme using an
auxiliary quaternion system, while~\cite{abdessameud2009attitude} extended
this idea to distributed attitude synchronization. A related velocity-free
approach based on MRP was developed in ~\cite{ren2009distributed}. Communication delays were later considered in~\cite{abdessameud2012attitude} for both leaderless and leader--follower
synchronization. Further developments considered finite-time velocity-free synchronization and leader--follower tracking ~\cite{peng2018specified,zou2018velocity}. Global finite-time synchronization and leader--follower tracking without angular-velocity measurements were subsequently obtained using hybrid quaternion-based methods in~\cite{huang2021global}, while event-triggered velocity-free synchronization was studied in~\cite{tang2022event,li2025distributed}.

Despite eliminating angular-velocity measurements, many velocity-free approaches still rely on attitude information. This limitation motivated the direct use of body-frame measurements of inertial vectors in the feedback law. In~\cite{tayebi2013inertial}, a velocity-free attitude stabilization scheme was developed using only inertial-vector measurements, without explicit attitude reconstruction. This approach was further developed in~\cite{benziane2015inertial}, while~\cite{berkane2016construction} proposed a global hybrid velocity-free solution based on inertial-vector measurements. 
The use of vector measurements has also been extended to distributed attitude synchronization. In~\cite{thakur2015distributed}, inertial-vector measurements were used directly in distributed control laws for both leaderless synchronization and synchronization to a prescribed reference attitude, without explicit attitude reconstruction. However, angular-velocity measurements were still required.
Recent work has strengthened the stability guarantees for distributed synchronization based on vector measurements. In~\cite{11312459}, leaderless attitude synchronization was addressed using locally expressed vector measurements, with almost-global
asymptotic stability guarantees. The subsequent extension in~\cite{boughellaba2025attitude} additionally considered the leader--follower problem, where reference vector measurements associated with the desired attitude are available only to a single designated agent, and established almost-global asymptotic stability for both synchronization settings. Nevertheless, for the dynamic problem, each agent is still assumed to have
access to its angular-velocity measurements. 
These observations motivate the development of distributed velocity-free synchronization schemes that use vector measurements directly while retaining almost-global asymptotic stability.
In this paper, we propose distributed velocity-free control laws for both leaderless and leader--follower attitude synchronization. The controllers are constructed directly from locally available vector measurements and require neither explicit attitude reconstruction nor
angular-velocity measurements.  In the
leader--follower configuration, only one designated agent requires access to the reference vector measurements.  Almost-global asymptotic stability of the desired synchronization sets is established for both configurations. The control design accounts for the antipodal quaternion representations of the same physical attitude, thereby avoiding unwinding.

\section{Preliminaries}\label{se1}
   
\subsection{Communication Graph}
\newtheorem{fact}{Fact}
\raggedbottom
\setlength{\abovedisplayskip}{3pt}
\setlength{\belowdisplayskip}{3pt}
\setlength{\abovedisplayshortskip}{2pt}
\setlength{\belowdisplayshortskip}{2pt}
Consider a network consisting of $N$ agents. The communication structure among the agents is represented by an undirected  graph $\mathcal{G}=(\mathcal{{V},\mathcal{E}})$, where $\mathcal{E}\subseteq \mathcal{V}\times \mathcal{V}$ is the set of edges.    Each vertex represents one agent in the network. For an undirected graph, an edge $(i,j)\in \mathcal{E}$ means that agents $i$ and $j$ can exchange information with each other.   
The  adjacency matrix of the graph is denoted by $\mathcal{A}=[a_{ij}]\in\mathbb{R}^{N \times N}.$
The element $a_{ij}=1$ if $(i,j)\in \mathcal{E}$, otherwise, $a_{ij}=0$.
The neighbor set of agent $i$ is defined as $\mathcal{N}_i=\{j\in \mathcal{V}:(i,j)\in \mathcal{E}\}.$  A path in an undirected graph is an ordered sequence of edges that connects one vertex to another. The graph is called connected if there exists at least one path between every two different vertices. A cycle is a path that starts and ends at the same vertex. If a connected graph has no cycles, then it is called a tree. Equivalently, an undirected tree is a connected acyclic graph in which every two vertices are connected by exactly one path.
An oriented graph is obtained by assigning an arbitrary orientation to each edge of an undirected graph. Suppose that each edge is assigned an index. Let $M=|\mathcal{E}|$ be the total number of edges, and let $\mathcal{M}=\{1, \ldots,M\}$ be the set of edge indices. The incidence matrix of the oriented graph is denoted by ${H}=[h_{ik}]\in\mathbb{R}^{N\times M}$ and its entries are defined by \cite{bai2008rigid}  
\begin{equation}\label{incidence1}
    h_{ik}=\begin{cases}
+1 & k\in \mathcal{M}^{+}_i\\
-1 & k\in\mathcal{M}^{-}_i\\
0, &  \text{otherwise}
    \end{cases}
\end{equation}
where $\mathcal{M}^{+}_i\subset\mathcal{M}$  is the set of edges for which agent $i$ is the head of the edge, and $\mathcal{M}^{-}_i\subset\mathcal{M}$ is the set of edges for which agent $i$  is the tail of the edge.
For a connected undirected graph, the incidence matrix satisfies $H^{\top}\mathbf{1}_{N}=0$
and $\text{rank}(H)=N-1$. Moreover, when the graph is an undirected tree, the columns of
$H$ are linearly independent.
\subsection{Notations and Attitude Representation}
Let $\mathbb{R}$, $\mathbb{R}^n$ and $\mathbb{C}$ 
denote the sets of real numbers, $n$-dimensional real vectors, and complex numbers, respectively. Let $I_n$ denote the $n\times n$ identity matrix. For a symmetric matrix $A$, the notation $A\succ 0$ denotes positive definiteness, while $A\succeq 0$ denotes positive semi-definiteness. For a symmetric matrix $A\in\mathbb {R}^{3\times3}$, define  $\mathcal{E}(A):=\{u\in\mathbb{S}^2:Au=\lambda u\hspace{0.1cm} \text{for some}\hspace{0.1cm}  \lambda\in\mathbb{R}\}$. We denote by $\lambda_{min}(M)$ and $\lambda_{max}(M)$, the minimum and maximin eigenvalues of a matrix $M$.
The Euclidean norm of a vector $y$ is defined by $\|y\|=\sqrt{y^{\top}y}$ and the induced Euclidean norm of a matrix $M$ is defined by $\|M\|=\sqrt{\lambda_{max}(M^{\top}M)}$.
The symbol $\otimes$  denotes the Kronecker product. 
The attitude of each rigid body, from the body-fixed frame $\mathcal{B}_i$ to  the inertial frame  $\mathcal{I}_i$, is represented by the unit-quaternion $Q_i=[q^{\top}_i,\eta_i]^{\top}\in \mathbb{S}^3$,  
where $q_i\in \mathbb{R}^3$ and $\eta_i\in\mathbb{R}$, and $\mathbb{S}^3$ is the unit sphere in $\mathbb{R}^4$.
For any $a\in\mathbb{R}^3$, let $a^{\times}$ denote the skew-symmetric matrix associated with $a$, defined such that
$a^{\times}b=a\times b, \hspace{0.1cm}\forall b \in \mathbb{R}^3,$ where $\times$ denote the  vector cross product in $ \mathbb{R}^3$. The composition of two unit-quaternions $
Q_i=[q^{\top}_i,\eta_i]^{\top}$ and $ 
Q_j=[q^{\top}_j,\eta_j]^{\top}$ is 
defined as $Q_i \odot Q_j=
\bigl(
\eta_i q_j+\eta_j q_i+q^{\times}_i q_j,\;
\eta_i\eta_j-q_i^\top q_j
\bigr)$. 
For any unit quaternion
$Q_i=[q^{\top}_i,\eta_i]^{\top}$ 
its inverse is given by $Q^{-1}_i=[-q^{\top}_i,\eta_i]^{\top}.$ 
Therefore, $Q_i \odot Q^{-1}_i=Q^{-1}_i\odot Q_i= [0^{\top}_{3\times 1},1]^{\top}$. 
\\
Another representation of the attitude of rigid body $i$ is given by
the rotation matrix $R_i:=\mathcal{R}(Q_i)\in SO(3)$  where 
$SO(3):=
\{{R}_i\in\mathbb{R}^{3\times 3}~|~{R}^\top_i{R}_i={R}_i{R}^\top_i=I_3,~\det({R}_i)=1\}$. 
The map $\mathcal{R}: \mathbb S^3\to SO(3)$ is defined by the Rodrigues formula $R_i=\mathcal R(Q_i)=(\eta_i^2-q_i^\top q_i)I_3+2q_iq_i^\top+2\eta_iq^{\times}_i$.

\section{Problem Formulation}\label{se2}
\subsection{Rigid-Body Dynamics and Vector Measurements}
Consider a network of $N$ agents. The rotational dynamics of the $i$-th rigid body are given by 
\begin{align}\label{eq:followersa}
    \dot{Q}_i&=\frac{1}{2}Q_i \odot \bar{\omega}_i\\
    {J}_i\dot{\omega}_i&=-\omega^{\times}_iJ_{i}\omega_i+\tau_i,
    \label{eq:followersb}
\end{align}
where $Q_i=[q_i^\top,\eta_i]^\top\in\mathbb S^3$ is the unit quaternion representing the orientation of the body-fixed frame $\mathcal B_i$ relative to
the inertial frame $\mathcal I$. The vector and scalar part of $Q_i$ are denoted by $q_i\in\mathbb R^3$ and $\eta_i\in\mathbb R$, respectively, and satisfy $q_i^\top q_i+\eta_i^2=1.$ Moreover, $\bar\omega_i=[\omega_i^\top,0]^\top$, where
$\omega_i\in\mathbb R^3$ denotes the angular velocity of the rigid body expressed in $\mathcal B_i$. The matrix $J_i=J_i^\top\succ0$ is the constant inertia
matrix of agent $i$, expressed with respect to $\mathcal B_i$, and
$\tau_i\in\mathbb R^3$ is the applied control torque.
\\
Each agent is assumed to measure a set of common set of fixed inertial unit-vectors $a_\ell$ in its body-fixed frame. Let $a_{\ell}\in\mathbb{S}^2$, $\ell=1,\ldots,m$, denote these constant vectors expressed in the inertial frame $\mathcal{I}$. The corresponding measurement available to agent $i \in \mathcal{V}$ is
\begin{align}\label{vector}
b_{i}^{\ell}=R_i^\top a_\ell, 
\end{align}
where $\ell=1,\ldots,m$. Vectors $a_{\ell}$ are assumed
to be known to all agents. As will be shown in the proposed feedback control design, each agent relies on its own measurements $b_{i}^{\ell}$ together with the corresponding measurements received from its neighboring agents according to the communication graph $\mathcal{G}=(\mathcal{V},\mathcal{E})$. 
Throughout this paper, we make the following assumptions:
 \begin{assumption}\label{as1} The set of inertial vectors $a_\ell\in\mathbb{S}^2$, $\ell=1,2,\ldots,m$, with $m\geq 2$, contains at least two non-collinear vectors. 
\end{assumption}
\begin{assumption}\label{as2} The communication graph $\mathcal{G}=(\mathcal{V},\mathcal{E})$ is an undirected tree.  
\end{assumption}
Two velocity-free attitude synchronization problems are considered.

\medskip
\noindent\textbf{Problem 1 (Velocity-free leaderless synchronization):} 
Under Assumptions \ref{as1}-\ref{as2}, design velocity-free distributed control torques $\tau_i$, using only the available vector measurements, such that, for almost all initial conditions,
\begin{equation}\label{leaderless_objective}
    \lim_{t\to\infty}Q_j\odot Q^{-1}_i=\mathcal{Q}_{I},~~\lim_{t\to\infty}\omega_i(t)=0, ~ \forall i\in\mathcal{V}, \forall j\in\mathcal{N}_i,
\end{equation}
where $\mathcal{Q}_{I}:=[0^{\top}_{3\times 1},\pm 1]^{\top}$.

\medskip
\noindent\textbf{Problem 2 (Velocity-free leader-follower
synchronization):} 
Let $Q_d=[\bar{q}^{\top}_d,\bar{\eta}_d]^{\top}$ denote a unit quaternion representing a reference attitude available only to one agent (agent 1) referred to as the leader. Let $R_d:=\mathcal{R}(Q_d)\in SO(3)$ be its corresponding rotation matrix. The reference vector measurements associated with the reference attitude are only available to the leader and can be constructed as follows:
\begin{equation}\label{l2}
    b_{d}^{\ell}=R_d^\top a_{\ell},\qquad \ell=1,\ldots,m.
\end{equation}
Under Assumptions \ref{as1}-\ref{as2}, design a velocity-free distributed control torques, using only the available vector measurements, such that, for almost all initial conditions,
\begin{align*}
\lim_{t\to\infty}(Q_d\odot Q_i^{-1})=\mathcal{Q}_I,\qquad\lim_{t\to\infty}\omega_i(t)=0,\qquad i\in\mathcal V.
\end{align*}
\section{ Preliminary Results}\label{se3}
This section introduces the relative-attitude error variables and establishes some preliminary results used in the subsequent sections. Since the communication graph is undirected, an arbitrary virtual orientation is assigned to each edge. This orientation is introduced only for analysis and does not change the undirected communication structure of $\mathcal{G}$. 
For an edge  $k\in\mathcal{M}^{+}_{i}\cap\mathcal{M}^{-}_{j}$, oriented from agent $j$ to agent $i$,
define the relative quaternion between agents $Q_j$ and $Q_i$ as
    $\tilde Q_k:=Q_j\odot Q_i^{-1}=[\tilde q_k^\top, \tilde\eta_k]^\top $, governed by
\begin{align} \label{rel_quat_component}
    \dot{\tilde{q}}_k&=\frac{1}{2}(\tilde{\eta}_kI_3+\tilde{q}^{\times}_k)\tilde{\omega}_k\\ \nonumber
    \dot{\tilde{\eta}}_k&=-\frac{1}{2}\tilde{q}^{\top}_k\tilde{\omega}_k,
\end{align}
where 
$\tilde{\omega}_{k}=R_i(\omega_j-\omega_i)$, with $R_i:=\mathcal{R}(Q_i)$, for every $(i,j)\in\mathcal{E}$. For any pair of agents $(i,j)\in\mathcal{E}$,
 the set $\mathcal M_i^{+}\cap \mathcal M_j^{-}$ identifies the oriented edges whose
tail is $j$ and whose head is $i$. Hence, if the edge between $j$ and
$i$ is oriented from $j$ to $i$, then there exists a unique edge index
$k$ such that
$\mathcal M_i^{+}\cap \mathcal M_j^{-}=\{k\}$.
Otherwise, no edge with this orientation exists and, consequently,
$\mathcal M_i^{+}\cap \mathcal M_j^{-}=\emptyset $.
Let
$\tilde{\omega}:=[\tilde{\omega}^{\top}_1,\tilde{\omega}^\top_2,\ldots,\tilde{\omega}^{\top}_M]^\top\in\mathbb R^{3M}$ and $\omega:=[
\omega^\top_1,\omega^\top_2,  \ldots, \omega^\top_N]^\top\in\mathbb R^{3N}$. 
One can verify that
\begin{align}
    \tilde\omega=-\bar{\mathbf H}^{\top}\mathbf R\omega,
\label{eq:omega_relation}
\end{align}
where $\mathbf{R}:=\text{diag}(R_1,R_2,\ldots,R_N)\in \mathbb{R}^{3N\times 3N}$.
The  block matrix $\bar{\mathbf{H}}$ is given as follows
\begin{align}\label{eq:H_definition2}
  \bar{\mathbf{H}}(t)=[ \bar{h}_{ik}]\in \mathbb{R}^{ 3N\times 3M}, \qquad \bar{h}_{ik}=\begin{cases}
I_3 & k\in{\mathcal{M}^{+}_i}\\
-\tilde{R}_k & k\in \mathcal{M}^{-}_i\\
0, &  \text{otherwise}
  \end{cases}
\end{align}
where $\tilde R_k:=\mathcal R(\tilde Q_k)=R_jR_i^\top.$
With the introduction of the relative orientation between neighboring agents,
 based on the virtual orientation, it follows that attitude synchronization is achieved when 
$\lim_{t \rightarrow \infty}\tilde Q_k(t)=\mathbf{Q}_I$
for $ k=1,\ldots,M$. Next, we define the weighting matrices used in the proposed feedback control schemes.
Let $\mathcal{K}:=\{\rho,\gamma,d\}$ and, for each $\kappa\in\mathcal{K}$, define
$M_\kappa:=\sum_{\ell=1}^{m}c_{{\kappa}\ell} a_\ell a_\ell^\top$, and $W_{\kappa}:=\text{tr}(M_{\kappa})I_3-M_{\kappa}$. 
Under Assumption~\ref{as1}, $W_{\kappa}\succ0$ and the gains $c_{{\kappa\ell}}$ are chosen such that $W_{{\kappa}}$ has distinct eigenvalues. Now, let us state the following useful lemmas:
\begin{lemma}\label{lm1}
Consider any two unit quaternions $Q_{x},Q_{y}\in\mathbb{S}^3$, with the corresponding rotation matrices $R_x:=\mathcal{R}(Q_x)$, $R_y:=\mathcal{R}(Q_y)$, and vector measurements $b^{\ell}_{x}:=R^{\top}_xa_\ell$, $b^{\ell}_{y}:=R^{\top}_ya_\ell,$  for $\ell=1,\ldots,m$. Define $\sigma_{xy}:=2\left(\eta_{xy}I_3-q_{xy}^{\times}\right)W_{\kappa}q_{xy}$ where $Q_{xy}:=Q_x\odot Q^{-1}_{y}=[q^{\top}_{xy},\eta_{xy}]^{\top}$.
Then
\begin{equation}\label{s0}
\sum_{\ell=1}^{m} c_{\kappa\ell} \left(b^{\ell}_{x}\times  b^{\ell}_{y}\right)=R_y^\top\sigma_{xy}.
\end{equation}  
\end{lemma}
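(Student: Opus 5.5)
The plan is to reduce the left-hand side of \eqref{s0} to a single rotation applied to a sum of cross products, then expand that rotation with the Rodrigues formula and identify the result with $\sigma_{xy}$ through standard skew-matrix identities.

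\textbf{Step 1: pull $R_y^\top$ outside.} Two facts are used:
\begin{itemize}
\item For any rotation $R$ and vectors $u,v$, $(R^\top u)\times(R^\top v)=R^\top(u\times v)$.
\item With the convention used for $\tilde R_k=R_jR_i^\top=\mathcal R(Q_j\odot Q_i^{-1})$, we have $R_{xy}:=\mathcal R(Q_{xy})=R_xR_y^\top$.
\end{itemize}
Writing $R_x^\top a_\ell=R_y^\top(R_yR_x^\top a_\ell)=R_y^\top R_{xy}^\top a_\ell$, each term becomes
\begin{equation*}
b_x^\ell\times b_y^\ell=R_y^\top\big((R_{xy}^\top a_\ell)\times a_\ell\big).
\end{equation*}
It therefore suffices to show $v:=\sum_\ell c_{\kappa\ell}(R_{xy}^\top a_\ell)\times a_\ell=\sigma_{xy}$.

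\textbf{Step 2: pass to skew matrices.} I will compute $v^\times$ using $(u\times w)^\times=wu^\top-uw^\top$. Summing over $\ell$ and using $M_\kappa=\sum_\ell c_{\kappa\ell}a_\ell a_\ell^\top$ gives
\begin{equation*}
v^\times=M_\kappa R_{xy}-R_{xy}^\top M_\kappa .
\end{equation*}

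\textbf{Step 3: expand with Rodrigues.} Substitute $R_{xy}=(\eta_{xy}^2-q_{xy}^\top q_{xy})I_3+2q_{xy}q_{xy}^\top+2\eta_{xy}q_{xy}^\times$. Abbreviating $q=q_{xy}$, $\eta=\eta_{xy}$, $M=M_\kappa$, $W=W_\kappa$, the three pieces behave as follows:
\begin{itemize}
\item The scalar-identity part cancels.
\item The $qq^\top$ part contributes $2(Mqq^\top-qq^\top M)=2(q\times Mq)^\times$.
\item The $q^\times$ part contributes $2\eta(Mq^\times+q^\times M)$. By the identity $Mq^\times+q^\times M=\big((\mathrm{tr}(M)I_3-M)q\big)^\times$, valid for symmetric $M$, this equals $2\eta(Wq)^\times$.
\end{itemize}
Hence
\begin{equation*}
v=2\eta Wq+2q\times Mq.
\end{equation*}

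\textbf{Step 4: conclude.} Since $Mq=\mathrm{tr}(M)q-Wq$ and $q\times q=0$, we get $q\times Mq=-q^\times Wq$. Therefore $v=2(\eta I_3-q^\times)Wq=\sigma_{xy}$, which proves \eqref{s0}.

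\textbf{Main obstacle.} The delicate points are the sign and ordering conventions:
\begin{itemize}
\item whether $\mathcal R(Q_x\odot Q_y^{-1})$ equals $R_xR_y^\top$ or its transpose under the paper's quaternion product;
\item the order of the factors in the cross product.
\end{itemize}
I will fix these first, using the paper's own statement $\tilde R_k=R_jR_i^\top$. A wrong choice would flip the sign of the $\eta$ term relative to the $q^\times$ term. The identity $Mq^\times+q^\times M=((\mathrm{tr}M)I-M)q)^\times$ can be checked quickly in the eigenbasis of $M$.
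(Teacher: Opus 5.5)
Your proof is correct. The reduction to $v=\sum_\ell c_{\kappa\ell}(R_{xy}^\top a_\ell)\times a_\ell$, the identity $v^\times=M_\kappa R_{xy}-R_{xy}^\top M_\kappa$, and the Rodrigues expansion using $M_\kappa q^\times+q^\times M_\kappa=(W_\kappa q)^\times$ all check out. The convention you flagged also holds without relying on the paper's claim $\tilde R_k=R_jR_i^\top$: the paper's $\odot$ is the Hamilton product, for which $\mathcal R$ is a homomorphism with $\mathcal R(Q^{-1})=\mathcal R(Q)^\top$, so $\mathcal R(Q_x\odot Q_y^{-1})=R_xR_y^\top$.

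The paper omits its own proof and only defers to \cite[Lemma 1]{tayebi2013inertial}, so your direct computation gives a complete, self-contained argument for the identity.
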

\begin{proof}
The proof is similar to \cite[Lemma 1]{tayebi2013inertial} and hence omitted here.
\end{proof}
\begin{lemma}\cite[Lemma 3]{tayebi2013inertial}\label{lem2}
let $Q=[q^{\top},\eta]^{\top}\in\mathbb S^3$ and let $W_{\kappa}=W_{\kappa}^\top\succ0$. 
Define
\begin{equation}\label{general_lem} \chi(Q,W_{\kappa}):=2\left(\eta I_3-q^\times\right)W_{\kappa}q.
\end{equation}
Then, $\chi(Q,W_{{\kappa}})=0$ if and only if either $Q=\mathcal{Q}_I:=[0^{\top}_{3\times 1},\pm 1]^{\top}$ or $Q=[u^{\top},0]^{\top}$ with $u$ being a unit eigenvector of $W_{\kappa}$.
\end{lemma}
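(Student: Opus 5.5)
We prove Lemma~\ref{lem2} directly, by splitting $\chi(Q,W_{\kappa})=0$ into a component along $q$ and a component orthogonal to it. Write $v:=W_{\kappa}q$, so that $\chi=2(\eta v - q\times v)$. The sufficiency direction is immediate. If $Q=\mathcal{Q}_I$, then $q=0$ and hence $\chi=0$. If $Q=[u^{\top},0]^{\top}$ with $W_{\kappa}u=\lambda u$, then $\eta=0$ and $q\times W_{\kappa}q=\lambda\, u\times u=0$.

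For necessity, assume $\eta W_{\kappa}q=q\times W_{\kappa}q$. The vector on the right is orthogonal to $q$. Taking the inner product of both sides with $q$ gives
\begin{equation*}
\eta\, q^{\top}W_{\kappa}q=0 .
\end{equation*}
Since $W_{\kappa}\succ0$, either $q=0$ or $\eta=0$. In the first case, $q=0$ and $\|Q\|=1$ force $\eta=\pm1$, that is, $Q=\mathcal{Q}_I$. In the second case, $\eta=0$ and $\|Q\|=1$ give $\|q\|=1$, and the condition becomes $q\times W_{\kappa}q=0$. This means $W_{\kappa}q$ is parallel to $q$, i.e., $W_{\kappa}q=\lambda q$ for some $\lambda\in\mathbb{R}$, where we use $q\neq0$. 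Hence $q=u$ is a unit eigenvector of $W_{\kappa}$ and $Q=[u^{\top},0]^{\top}$.

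The one point that needs care is the inner-product step. Its conclusion $\eta\,q^{\top}W_{\kappa}q=0$ splits into the two cases only because $W_{\kappa}$ is positive definite. This holds under Assumption~\ref{as1}: with at least two non-collinear $a_\ell$ and positive gains $c_{\kappa\ell}$, the matrix $W_{\kappa}=\mathrm{tr}(M_{\kappa})I_3-M_{\kappa}$ satisfies $W_{\kappa}\succ0$, as stated before the lemma. The rest is elementary. The result also holds without distinct eigenvalues; distinctness only makes the second set of equilibria finite, namely six quaternions.
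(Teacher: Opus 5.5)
Your proof is correct and complete. Projecting $\eta W_{\kappa}q=q\times W_{\kappa}q$ onto $q$ gives $\eta\,q^{\top}W_{\kappa}q=0$, and $W_{\kappa}\succ0$ together with $\|Q\|=1$ then splits into $q=0$ or ($\eta=0$ and $q\times W_{\kappa}q=0$); this is the standard argument, and your side remarks hold (distinct eigenvalues are not needed and only reduce the second set to six quaternions, and the appeal to Assumption~\ref{as1} is true but unnecessary since $W_{\kappa}\succ0$ is a hypothesis of the lemma). The paper gives no proof of its own here and simply imports the result from \cite[Lemma 3]{tayebi2013inertial}, so your argument supplies a self-contained version.
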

\begin{lemma}\label{lm3}
Let $W=W^{\top}\succ0$ have distinct eigenvalues, and let $u\in \mathbb{S}^2$ be an eigenvector of $W$, associated with the eigenvalue $\mu>0$. Then, $\phi(W,u,\mu):=-2(\mu I_3+u^\times Wu^\times)$ is nonsingular and has at least one negative eigenvalue $-2\mu$. 
\end{lemma}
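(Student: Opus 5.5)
The plan is to compute the spectrum of $\phi(W,u,\mu)$ explicitly by working in an orthonormal eigenbasis of $W$. Since $W=W^\top\succ0$ has distinct eigenvalues, take orthonormal unit eigenvectors $u=v_1, v_2, v_3$ with eigenvalues $\mu=\mu_1,\mu_2,\mu_3>0$, pairwise distinct. Choose the signs so that $(v_1,v_2,v_3)$ is right-handed, i.e. $v_1\times v_2=v_3$, $v_2\times v_3=v_1$, $v_3\times v_1=v_2$. The claim then reduces to showing that $u^\times W u^\times$ is diagonal in this basis.

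\textbf{Step 1.} I will evaluate $u^\times W u^\times$ on each basis vector:
\begin{itemize}
\item $u^\times W u^\times v_1=0$, since $u^\times u=0$;
\item $u^\times W u^\times v_2=u^\times W v_3=\mu_3\, v_1\times v_3=-\mu_3 v_2$;
\item $u^\times W u^\times v_3=u^\times W(-v_2)=-\mu_2\, v_1\times v_2=-\mu_2 v_3$.
\end{itemize}
Hence, in this basis, $u^\times W u^\times=\mathrm{diag}(0,-\mu_3,-\mu_2)$.

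\textbf{Step 2.} Substituting into the definition gives
\begin{equation*}
\phi(W,u,\mu)=-2\,\mathrm{diag}(\mu,\;\mu-\mu_3,\;\mu-\mu_2),
\end{equation*}
so the eigenvalues of $\phi$ are $-2\mu$, $2(\mu_3-\mu)$ and $2(\mu_2-\mu)$, with eigenvectors $u,v_2,v_3$.

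\textbf{Step 3.} The first eigenvalue satisfies $-2\mu<0$ because $\mu>0$; this is the negative eigenvalue in the statement. The other two, $2(\mu_3-\mu)$ and $2(\mu_2-\mu)$, are nonzero because the eigenvalues of $W$ are distinct. Therefore $\det\phi\neq0$ and $\phi$ is nonsingular.

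The only delicate point is the sign bookkeeping with cross products in Step 1. Fixing a right-handed eigenbasis at the outset handles it, and the result does not depend on which eigenvector is called $u$, since $u$ can always serve as $v_1$ of a right-handed frame. As an independent check, one can use the identity $u^\times u^\times=uu^\top-I_3$ when $W$ is isotropic, which gives $\mathrm{diag}(0,-\mu,-\mu)$ and so agrees with Step 1 in that special case.
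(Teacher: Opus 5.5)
Your proof is correct, but it takes a different route from the paper. The paper argues by contradiction without choosing coordinates. It supposes $u^\times Wu^\times x=-\mu x$ with $x\neq0$, premultiplies by $u^\times$, and uses $(u^\times)^2=uu^\top-I_3$, $W=W^\top$ and $Wu=\mu u$ to get $W(u^\times x)=\mu(u^\times x)$. Thus $u^\times x$ would be a second eigenvector for $\mu$, orthogonal to $u$, which contradicts distinctness. The eigenvalue $-2\mu$ is then read off from $\phi u=-2\mu u$.

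You instead compute the whole spectrum in a right-handed orthonormal eigenbasis, obtaining $\{-2\mu,\,2(\mu_3-\mu),\,2(\mu_2-\mu)\}$. Your sign bookkeeping in Step 1 checks out, and a right-handed frame with $v_1=u$ always exists (take $v_3:=u\times v_2$).

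What your route buys:
\begin{itemize}
\item It shows that only simplicity of $\mu$ is needed, not full distinctness of the spectrum of $W$.
\item It tells you exactly how many negative eigenvalues $\phi$ has: one, two, or three according as $\mu$ is the smallest, middle, or largest eigenvalue of $W$.
\item It sidesteps the degenerate case $u^\times x=0$, which the paper's contradiction step passes over when it writes $u^\times x=\pm u$. The correct conclusion there is $u^\times x\in\operatorname{span}\{u\}\cap u^\perp=\{0\}$. Hence $x\parallel u$, and $-\mu x=u^\times Wu^\times x=0$ forces $x=0$.
\end{itemize}
The paper's argument, in turn, is shorter and needs only the identity $(u^\times)^2=uu^\top-I_3$ rather than the spectral theorem plus a choice of oriented frame.
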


\begin{proof}
We prove the nonsingularity of  $\phi(W,u,\mu)$ by contradiction. 
Assume that $\phi(W,u,\mu)$ is singular, which means that $u^\times Wu^\times$ has an eigenvalue equal to $-\mu<0$ with the associated eigenvector $x\not=0$, \textit{i.e.,} $u^\times Wu^\times x=-\mu x$.
Premultiplying $u^\times Wu^\times x=-\mu x$ by $u^\times$ and using $ (u^\times)^2=uu^\top-I_3$ yields $uu^\top Wu^\times x-Wu^\times x
=-\mu u^\times x.$ Using the fact that $W=W^\top$ and $Wu=\mu u$, one has $W(u^\times x)=\mu(u^\times x)$, which implies that $\mu$ is an eigenvalue of $W$with the associated eigenvector $u^\times x$. Consequently, $u^\times x=\pm u$ which implies that $u=0$, which is a contradiction since $u\in \mathbb{S}^2$. Therefore, $\phi(W,u,\mu)$ is nonsingular. Finally, since $u^\times u=0$, one has $-2(\mu I_3+u^\times Wu^\times)u=-2\mu u$.
Hence, $-2\mu$ is an eigenvalue of $\phi(W,u,\mu)$ with the associated unit eigenvector $u$. Therefore, $\phi(W,u,\mu)$ has at least one negative eigenvalue.
\end{proof}

\section{Velocity-free Leaderless Attitude Synchronization}\label{se5}
This section addresses the velocity-free leaderless synchronization problem formulated in Problem~1. 

Motivated by the approach in \cite{tayebi2008unit}, an auxiliary system is associated with each agent \(i\in\mathcal{V}\) to compensate for the lack of angular-velocity measurements. The auxiliary systems are governed by
\begin{align}\label{eq:auxiliary_system_agent}
    \dot{P}_i=\frac{1}{2}P_i\odot\bar{\beta}_i,
\end{align}
where $P_i(0) \in \mathbb S^3$ and $\bar\beta_i=[\beta_i^\top,0]^\top$, with $\beta_i\in\mathbb R^3$ being the input to the auxiliary system \eqref{eq:auxiliary_system_agent}, which will be designed later.
Define the attitude error between the auxiliary attitude and the actual attitude of agent $i$ as $\tilde P_i:=P_i\odot Q_i^{-1}
=[\tilde p_i^\top,\tilde{\epsilon}_i]^\top$. The corresponding error dynamics are given by
\begin{equation}
\begin{aligned}\label{eq:agent_auxiliary_vector_part}
\dot{\tilde p}_i
&=\frac{1}{2}\left(\tilde{\epsilon}_i I_3+\tilde p_i^\times\right)\Omega_i\\
\dot{\tilde{\epsilon}}_i
&=-\frac{1}{2}\tilde p_i^\top\Omega_i,
\end{aligned}
\end{equation}
where $\Omega_i=R_i(\beta_i-\omega_i)$ with $R_i:=\mathcal{R}(Q_i)$. Furthermore, let ${\Omega} =[
\Omega_1^\top, \Omega_2^\top, \ldots, \Omega_N^\top]^\top\in\mathbb{R}^{3N}$ and $\mathbf{\beta} =[
\beta_1^\top, \beta_2^\top, \cdots, \beta_N^\top]^\top\in\mathbb{R}^{3N}$. One can verify that 
\begin{align}\label{eq:Omega_agent_compact}
{\Omega}
=\mathbf{R}(\mathbf{\beta}-\omega),
\end{align}
where $\mathbf{R}:=\operatorname{diag}(R_1,R_2,\ldots,R_N)\in\mathbb{R}^{3N\times3N}$.
Now, let $R_i^p:=\mathcal{R}(P_i)$ denote the rotation matrix associated with the unit-quaternion $P_i$. The corresponding auxiliary vector measurements are defined as
\begin{align}\label{vecme}
\tilde{b}^{\ell}_{i}:=(R^{p}_i )^\top a_\ell, 
\end{align}  
for $\ell=1,\ldots,m$. Since the inertial vectors $a_\ell$ are known to all agents, the auxiliary vectors $\tilde{b}_i^\ell$ can be generated locally from the auxiliary attitude $P_i$.
Based on the preceding development, we propose the following distributed feedback control torque for each agent $i\in\mathcal{V}$:
\begin{equation}\label{eq:distributed_torque2}
\begin{aligned}
\tau_i=k_{Q}\sum_{j\in \mathcal{N}_i}\sum^{m}_{\ell=1}c_{\rho\ell}( b^{\ell}_{j}\times b^{\ell}_{i})+{k}_{P}\sum^{m}_{\ell=1}c_{\gamma\ell}(\tilde{b}^{\ell}_{i}\times b^{\ell}_{i}).
\end{aligned}
\end{equation}
The control torque \eqref{eq:distributed_torque2} is implemented together with the auxiliary system \eqref{eq:auxiliary_system_agent}, whose input is designed as
\begin{align}\label{eq:auxiliary}
\beta_i=-\sum^{m}_{\ell=1}c_{\gamma\ell}(\tilde{b}^{\ell}_{i}\times b^{\ell}_{i}),
\end{align}
where $k_Q>0$, $k_P>0$, and the positive gains $c_{\rho\ell}$ and $c_{\gamma\ell}$ are chosen such that $W_\rho$ and $W_\gamma$ have distinct eigenvalues.
Next, we analyze the stability of the closed-loop system consisting of
\eqref{eq:followersb}, \eqref{rel_quat_component}, and \eqref{eq:agent_auxiliary_vector_part}, under the control law \eqref{eq:distributed_torque2} and the auxiliary input
\eqref{eq:auxiliary}. Before stating the main result, define the following two sets:
\begin{align}\label{qpho}
\mathcal{Q}_{\pi}^{\rho}:=\{[u^{\top},0]^{\top}\in\mathbb{S}^3: u\in\mathcal E(W_\rho)\},
\end{align}
\begin{align}\label{qgama}
\mathcal{Q}_{\pi}^{\gamma}:=\{ [u^{\top} ,0]^{\top}\in\mathbb{S}^3:u\in\mathcal E(W_\gamma)\}.
\end{align}
Furthermore, let ${x}_Q:=(\tilde Q_{1}, \tilde Q_{2},\ldots, \tilde Q_{M}, \tilde{P}_1, \tilde P_2,\ldots, \tilde P_N, \omega)\in \mathcal S_{{Q}},$ where $\mathcal{S}_{{Q}}:=(\mathbb S^3)^M \times (\mathbb S^3)^N \times  \mathbb{R}^{3N}$.
The desired
equilibrium set is defined as: ${\Gamma}_{Q}:=\{{x}_{Q}\in\mathcal S_{{Q}}:\tilde{Q}_k\in\mathcal\{Q_{I}\},\hspace{0.1cm}\tilde P_i\in\{\mathcal{Q}_{I}\},\hspace{0.1cm}\omega=0_{3N},\hspace{0.1cm} \forall k\in\mathcal{M},\hspace{0.1cm} \forall i\in\mathcal{V}\}$. The undesired one is defined as: 
${\bar{\Gamma}}_{Q}:=\{{x}_Q\in\mathcal S_{{Q}}:
\tilde{Q}_k\in\{\mathcal{Q}_I\}, \forall k\in\mathcal{M}^{I}, 
\tilde{Q}_k\in\mathcal{Q}^{\rho}_{\pi}, \forall k\in\mathcal{M}^{\pi}, \tilde{P}_i\in\{\mathcal{Q}_I\}, \forall i\in \mathcal{V}^{I}, \tilde{P}_i\in\mathcal{Q}^{\gamma}_{\pi},\forall i\in\mathcal{V}^{\pi},\omega=0_{3N}\}$  where $\mathcal{M}^I\cup\mathcal{M}^\pi=\mathcal{M}$ and $\mathcal{V}^I\cup\mathcal{V}^\pi=\mathcal{V}$, with
$\mathcal{M}^\pi\neq\varnothing$ or $\mathcal{V}^\pi\neq\varnothing$.
The stability properties of the resulting closed-loop system are established in the following theorem.

\begin{theorem}
Consider a network of $N$ agents governed by \eqref{eq:followersa}-\eqref{eq:followersb} under the proposed control law \eqref{eq:distributed_torque2}, together with the auxiliary system \eqref{eq:auxiliary_system_agent} and the auxiliary input \eqref{eq:auxiliary}. Let Assumptions \ref{as1}-\ref{as2} hold. Then, the following statements hold:
\begin{enumerate}[label=(\roman*),ref=(\roman*)]
\item  All solutions of \eqref{eq:followersb}, \eqref{rel_quat_component}, and \eqref{eq:agent_auxiliary_vector_part}, under the control law \eqref{eq:distributed_torque2} and the auxiliary input \eqref{eq:auxiliary}  converge to the equilibrium set
${\mathcal {U}}_Q:={\Gamma}_{{Q}}\cup{\bar{\Gamma}}_{Q}$.
\item \label{leaderless_uss}The set of all undesired equilibrium points
${\bar{\Gamma}}_{Q}$
is unstable.
\item \label{leaderless_ss}The desired equilibrium set \({\Gamma}_{{Q}}\) is almost globally asymptotically stable\footnote{The set \({\Gamma}_{{Q}}\) is almost globally asymptotically stable if it is stable and attractive from all initial conditions except for a set of zero Lebesgue measure.}. 
\end{enumerate}
\end{theorem}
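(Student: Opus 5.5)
We prove (i) with a Lyapunov function and LaSalle's invariance principle, (ii) by linearizing at an undesired equilibrium and applying Chetaev's theorem, and (iii) by combining the two.

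\textbf{Lyapunov function.} Take
\begin{align*}
V=\frac{1}{2}\sum_{i\in\mathcal V}\omega_i^\top J_i\omega_i+4k_Q\sum_{k\in\mathcal M}\tilde q_k^\top W_\rho\tilde q_k+4k_P\sum_{i\in\mathcal V}\tilde p_i^\top W_\gamma\tilde p_i .
\end{align*}
It is the standard energy-like function of \cite{tayebi2008unit,tayebi2013inertial}, extended over edges and auxiliary errors. Since $\tilde q^\top W\tilde q$ is invariant under $\tilde Q\mapsto-\tilde Q$, $V$ vanishes on both antipodal components of $\Gamma_Q$, which rules out unwinding.

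\textbf{Rewriting the control law.} By Lemma~\ref{lm1}, for an edge $k$ oriented from $j$ to $i$,
\begin{align*}
\sum_\ell c_{\rho\ell}(b_j^\ell\times b_i^\ell)=R_i^\top\sigma_k, \qquad \sum_\ell c_{\rho\ell}(b_i^\ell\times b_j^\ell)=-R_j^\top\tilde R_k\sigma_k ,
\end{align*}
where $\sigma_k:=\chi(\tilde Q_k,W_\rho)$. Likewise $\sum_\ell c_{\gamma\ell}(\tilde b_i^\ell\times b_i^\ell)=R_i^\top\chi(\tilde P_i,W_\gamma)$. Hence, in stacked form,
\begin{align*}
\tau=k_Q\mathbf R^\top\bar{\mathbf H}\sigma+k_P\mathbf R^\top\chi_P,\qquad \beta=-\mathbf R^\top\chi_P ,
\end{align*}
with $\sigma$ and $\chi_P$ the stacked $\sigma_k$ and $\chi(\tilde P_i,W_\gamma)$.

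\textbf{Derivative of $V$.} From \eqref{rel_quat_component}, $\frac{d}{dt}4\tilde q_k^\top W_\rho\tilde q_k=\sigma_k^\top\tilde\omega_k$. Using \eqref{eq:omega_relation}, the edge term gives $-k_Q\sigma^\top\bar{\mathbf H}^\top\mathbf R\omega$, which cancels the $k_Q$ part of $\omega^\top\tau$. The auxiliary term gives
\begin{align*}
k_P\chi_P^\top\Omega=k_P\chi_P^\top\mathbf R(\beta-\omega)=-k_P\|\chi_P\|^2-k_P\chi_P^\top\mathbf R\omega ,
\end{align*}
which cancels the $k_P$ part of $\omega^\top\tau$. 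The gyroscopic term $\omega_i^\top\omega_i^\times J_i\omega_i$ vanishes. Therefore $\dot V=-k_P\|\chi_P\|^2\le 0$, so the trajectories are bounded on the compact-times-$\mathbb R^{3N}$ state space.

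\textbf{Invariance argument for (i).} By LaSalle, solutions converge to the largest invariant set contained in $\{\chi_P=0\}$. There $\beta\equiv0$, and invariance requires $\dot\chi_P=0$. By Lemma~\ref{lem2}, each $\tilde P_i$ then sits in a discrete set ($\mathcal Q_I$ or $\mathcal Q^\gamma_\pi$, the latter finite up to sign since $W_\gamma$ has distinct eigenvalues). Thus $\tilde P_i$ is constant and $\Omega_i=-R_i\omega_i$ satisfies $(\tilde\epsilon_iI_3+\tilde p_i^\times)\Omega_i=0$, $\tilde p_i^\top\Omega_i=0$. The matrix $[\tilde\epsilon_iI+\tilde p_i^\times;\,\tilde p_i^\top]$ has full column rank for unit $\tilde P_i$, so $\omega\equiv0$ and hence $\tau\equiv0$. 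This gives $\bar{\mathbf H}\sigma=0$ (as $\chi_P=0$, $\mathbf R$ invertible). For a tree, $\bar{\mathbf H}$ has full column rank, since it equals $H\otimes I_3$ up to block-invertible column scaling by $\tilde R_k$; this can be checked by peeling leaves. So $\sigma=0$, and Lemma~\ref{lem2} places each $\tilde Q_k$ in $\{\mathcal Q_I\}\cup\mathcal Q^\rho_\pi$. This is exactly $\mathcal U_Q$.

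\textbf{Instability for (ii).} Fix a point of $\bar\Gamma_Q$, say with $\tilde Q_k=[u^\top,0]^\top$, $W_\rho u=\mu u$. Near it, write $\tilde Q_k$ with scalar part $\tilde\eta_k$ small and $\tilde q_k=u+\delta$. Then
\begin{align*}
V-V^*\approx\tfrac12\omega^\top J\omega+4k_Q\big(\delta^\top(W_\rho-\mu I_3)\delta-\mu\tilde\eta_k^2\big)+\dots
\end{align*}
so $V$ strictly decreases below its value $V^*$ along the $\tilde\eta_k$ direction. Equivalently, the linearization of $\sigma_k$ involves $\phi(W_\rho,u,\mu)$ from Lemma~\ref{lm3}, which is nonsingular with a negative eigenvalue. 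The same holds for $\tilde P_i\in\mathcal Q^\gamma_\pi$ with $W_\gamma$. Since $V$ is nonincreasing, I would use the Chetaev-type argument: points arbitrarily close to the equilibrium with $V<V^*$ cannot converge to it. Nonsingularity from Lemma~\ref{lm3} makes the undesired equilibria isolated in the relevant coordinates, hence hyperbolic enough to give instability.

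\textbf{Almost-global stability for (iii).} Linearize the closed loop at each undesired point, in local coordinates on the tree edges and auxiliaries. Lemma~\ref{lm3} plus the damping structure should show the Jacobian has an eigenvalue with positive real part, with no zero eigenvalues. The stable manifold theorem then makes each region of attraction of $\bar\Gamma_Q$ a lower-dimensional manifold, of measure zero, and there are finitely many such points modulo sign. Stability of $\Gamma_Q$ follows since $V$ attains its minimum $0$ exactly there. Combined with (i), this yields (iii).

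The main obstacle is the hyperbolicity computation in (ii)/(iii). It requires coupling the edge linearization through $\bar{\mathbf H}$ with the auxiliary dynamics, and showing no center directions appear despite $\dot V$ depending only on $\chi_P$. I would first try to establish it via the $V<V^*$ argument together with analyticity, falling back on the explicit Jacobian built from $\phi(W_\rho,u,\mu)$ and $\phi(W_\gamma,u,\mu)$.
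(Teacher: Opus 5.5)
Your part (i) works once two slips are fixed, and running LaSalle on the invariant set is cleaner than the paper's Barbalat chain. First, with weight $4$ you get $\frac{d}{dt}\,4\tilde q_k^\top W_\rho\tilde q_k=2\sigma_k^\top\tilde\omega_k$, and likewise a factor $2$ on $\chi(\tilde P_i,W_\gamma)^\top\Omega_i$. The cross terms then do not cancel, so the weights must be $2k_Q$ and $2k_P$, as in the paper. Second, the $x_Q$-dynamics involve $\mathbf R$ and are not autonomous, so LaSalle has to be applied on the full state $(Q_i,P_i,\omega_i)_{i\in\mathcal V}$. A smaller point: $\bar{\mathbf H}=\mathbf R(H\otimes I_3)\operatorname{diag}(R_{i_k}^\top)$, with $i_k$ the head of edge $k$, so both row and column block scalings are needed; your leaf-peeling argument is nevertheless correct. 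Your energy argument for (ii) is also valid, and more elementary than the paper's, but close it with (i) and the finiteness of $\mathcal U_Q$ in $x_Q$-coordinates. A solution starting near $x^*$ with $V<V^*$ must converge to a different point of $\mathcal U_Q$, and hence leave a fixed ball. Nonsingularity of $\phi(W,u,\mu)$ says nothing about hyperbolicity of the closed-loop Jacobian.

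The genuine gap is (iii). The zero-measure claim needs, at every point of $\bar\Gamma_Q$, a Jacobian eigenvalue with positive real part, and you only say this \emph{should} follow. Your preferred route cannot supply it. That $V<V^*$ arbitrarily close to $x^*$ only shows the basin of $x^*$ is not a neighbourhood, and analyticity does not help. For instance, $\dot x=-(x^2-y^2)x$, $\dot y=-(x^2-y^2)y$ with $V=x^2-y^2$ gives $\dot V=-2(x^2-y^2)^2\le 0$, and the origin is not a local minimum of $V$. Yet the whole open sector $|y|<|x|$ converges to it.

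What is missing is the paper's spectral argument. Eliminating $v_\rho,v_{\tilde p}$ from $\bar{\mathcal J}^*Z=\lambda Z$ gives $P(\lambda)v_\omega=0$, where $P(\lambda)=\frac{\lambda^2}{k_P}J+\frac{k_Q}{2k_P}C\mathbf B^*C^\top+\lambda\bar D(\lambda I+\bar D)^{-1}$, with $C=(\mathbf R^*)^\top\bar{\mathbf H}^*$ and $\bar D=\frac12(\mathbf R^*)^\top\mathbf D^*\mathbf R^*$. This matrix is symmetric for real $\lambda$ and positive definite for large $\lambda$. One then finds a point where $\lambda_{\min}(P(\lambda))<0$ inside an interval on which $P$ is continuous, in two cases.
\begin{enumerate}[label=(\alph*)]
\item If $\mathcal V^\pi=\varnothing$, take $\lambda=0$. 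Here $\bar D\succ0$, and the full column rank of $C$ transfers the negative eigenvalue of $\mathbf B^*$ (Lemma~\ref{lm3}) to $C\mathbf B^*C^\top$.
\item If $\mathcal V^\pi\neq\varnothing$, let $\lambda\downarrow-\lambda_{\min}(\bar D)$. Here $\mathbf D^*$ is itself indefinite, and $\lambda\bar D(\lambda I+\bar D)^{-1}$ diverges to $-\infty$ along the corresponding eigenvector. So the \emph{damping structure} is the source of instability here, not a help.
\end{enumerate}
The intermediate value theorem applied to $\lambda_{\min}(P(\lambda))$ then yields a real $\lambda^*>0$ with $P(\lambda^*)$ singular, i.e., an unstable eigenvalue. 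Only after this does the (center-)stable manifold argument give zero Lebesgue measure.
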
 

Define
\begin{align}
\sigma_k:=2(\tilde\eta_k I_3-\tilde q_k^\times
)W_\rho\tilde q_k,
\label{sig}
\end{align}
for all $k\in\mathcal{M}$, and define $s_i:=k_Q\sum_{j\in\mathcal N_i}\sum_{\ell=1}^m
c_{\rho_\ell}( b^{\ell}_{j}\times{{b}^{\ell}_i}).$
It follows from Lemma~\ref{lm1} that
\begin{equation*}
    s_i= k_Q\,R_i^\top\sum_{j\in\mathcal N_i}\sigma_{ji},
\end{equation*}
for each $i \in \mathcal{V} $. Moreover, for each agent $i\in\mathcal{V}$, partition the neighbor set as $\mathcal N_i=\mathcal{N}^{\text{in}}_{i}
\cup\mathcal{N}^{\text{out}}_{i}$ such that $\mathcal{N}^{\text{in}}_{i}\cap\mathcal N^{\text{out}}_{i}=\emptyset$ where $\mathcal{N}^{\text{in}}_{i}$ contains the neighbors associated with edges oriented toward agent $i$, and $\mathcal{N}^{\text{out}}_i$ contains the neighbors associated with edges oriented away from agent $i$. Then, one can verify that
\begin{subequations}
\begin{align}
s_{i}=& k_{Q} R^\top_{i}\left(\sum_{j\in\mathcal{N}^{\text{in}}_{i}}\sigma_{ji}
+\sum_{j\in\mathcal{N}^{\text{out}}_{i}}\sigma_{ji}\right)\label{eq:si_compact0}\\&\label{eq:si_compact1}= k_{Q} R^{\top}_{i}\left(\sum_{j\in\mathcal{N}^{\text{in}}_{i}}\sigma_{ji}-\sum_{j\in\mathcal{N}^{\text{out}}_{i}} R_{i}R^\top_{j}\sigma_{ij}
\right)\\&\label{eq:si_compact2}= k_QR_i^\top\left(\sum_{k\in\mathcal{M}^{+}_i}\sigma_k-\sum_{k\in\mathcal{M}^{-}_i} \tilde R_k\sigma_k\right)\\&\label{eq:si_compact3}= k_Q R_i^\top\sum_{k=1}^{M}\bar h_{ik}\sigma_k,
\end{align}
\end{subequations}
where $\bar h_{ik}$ is given in \eqref{eq:H_definition2}. In the above derivations, we used the identity $R_{i}R^\top_{j}\sigma_{ij}=-\sigma_{ji}$ to obtain \eqref{eq:si_compact1} from \eqref{eq:si_compact0}.
Define $J:=\text{diag}[  J_1,J_2,\ldots,J_N]\in\mathbb R^{3N\times 3N}$, $\tau:=[\tau_1^\top,\tau_2^\top,\ldots,\tau_N^\top ]^\top\in\mathbb R^{3N}$, and $\mathcal{W}:=\text{diag}([\omega_1]^{\times},[\omega_2]^{\times},\ldots,[\omega_N]^{\times})\in\mathbb R^{3N \times 3N}$. Stacking the angular velocity dynamics of all agents yields
\begin{align}\label{eq:stacked_omega}
 J\dot{\omega}=-\mathcal{W}J\omega+\tau. 
\end{align}
Consider the following Lyapunov function candidate:
\begin{align}\label{Lyap2}
V( x_Q):=2k_{Q}\sum^{M}_{k=1}\tilde{q}^{\top}_kW_{\rho}\tilde{q}_k+2{k}_{P}\sum^{N}_{i=1}\tilde{p}^{\top}_iW_{\gamma}\tilde{p}_i+\frac{1}{2}\omega^{\top}J\omega,
\end{align}
which is positive definite on $\mathcal{S}_Q$ with respect to ${\Gamma}_Q$.  Taking the time derivative of \eqref{Lyap2} along the trajectories of \eqref{rel_quat_component}, \eqref{eq:agent_auxiliary_vector_part} and \eqref{eq:stacked_omega} yields
\begin{align}\label{v2}
\dot{V}({x}_Q)=k_{Q}\sum^{M}_{k=1}\sigma^{\top}_k\tilde{\omega}_k+k_{P}
\sum^{N}_{i=1}\psi^{\top}_i{\Omega}_i+\omega^{\top}\tau,
\end{align}
where $\sigma_k$ is defined in \eqref{sig} and
\begin{equation}\label{eq:psi}
\psi_{i}:=2\left(\tilde{\epsilon}_{i} I_3-\tilde p^{\times}_{i}\right)W_{\gamma}\tilde p_{i}.
\end{equation}
Furthermore, it follows from Lemma~\ref{lm1} that the control torque \eqref{eq:distributed_torque2} and the auxiliary input \eqref{eq:auxiliary} can be expressed in the following compact form:
\begin{align}
    \tau&=k_Q\mathbf R^\top\bar{\mathbf H}\sigma
    +k_P\mathbf R^\top\psi\label{tor}\\
    \boldsymbol\beta&=-\mathbf{R}^\top\psi,\label{beta_comp} 
\end{align}
where $\psi:=[\psi_1^\top, \psi_2^\top, \ldots, 
\psi_N^\top]^{\top}\in\mathbb R^{3N}$. In view of \eqref{eq:omega_relation} and \eqref{eq:Omega_agent_compact}, one can verify that
\begin{align}\label{v2_compact}
 \dot{V}({x}_Q)=-k_Q\sigma^{\top}\bar{\mathbf H}^{\top}\mathbf{R}\omega+k_P\psi^{\top}({\mathbf{R}}\beta-{\mathbf{R}}\omega)+
 \omega^{\top}\tau.
\end{align}
Substituting \eqref{tor}-\eqref{beta_comp} into \eqref{v2_compact} yields
\begin{align}\label{v3}
    \dot{V}({x}_Q)=-k_{P}\|\psi\|^2,
\end{align}
 which is negative semi-definite. Thus, $V$ is non-increasing and has a finite limit. Hence, the equilibrium set
$\Gamma_Q$ is stable. Therefore  $\omega$ is bounded which in turn implies that $\dot{\psi}$ is also bounded. Consequently, $\ddot{V}$ is bounded, and hence $\dot{V}$ is uniformly continuous. By Barbalat's lemma, $\lim_{t\to\infty}\dot V(t)=0$, which, in view of \eqref{v3}, implies that  $\lim_{t\to\infty}\psi(t)=0$. Then, by \eqref{eq:psi} and Lemma~\ref{lem2},
$\tilde P_i(t)$ converges to the set $\{\mathcal{Q}_I\}\cup\mathcal{Q}_\pi^\gamma$,  when $t$ tends to infinity, where $\mathcal {Q}_\pi^\gamma$ is defined in \eqref{qgama}. Moreover, the boundedness of $\omega$ and $\tau$ implies from \eqref{eq:stacked_omega} that $\dot{\omega}$ is bounded. Since  $\omega_i$, and $\dot{\omega}_i$ are bounded, it follows that $\dot{\Omega}_i$ is bounded.
Consequently, $\ddot{\tilde P}_i$ is bounded, and hence $\dot{\tilde{P}}_i$ is uniformly continuous, and by Barbalat's lemma one has $\lim_{t\to\infty}\dot{\tilde{P}}_i(t)=0$. It follows that
$\lim_{t\to\infty}\Omega_i(t)=0$  which implies that  $\lim_{t\to\infty}\omega(t)=0.$ Moreover, since $\omega$, $\dot{\omega}$, and $\dot{\tau}$ are bounded, $\ddot{\omega}$ is bounded. Hence, $\dot{\omega}$ is uniformly continuous, and consequently $\lim_{t\to\infty}\dot{\omega}(t)=0.$ Then using the fact that $\bar{\mathbf{H}}$ has full column rank and from Lemma~\ref{lem2}, one can conclude that  
$\tilde{Q}_k(t)$ convergence to the set $\{\mathcal{Q}_I\}\cup\mathcal{Q}^\rho_{\pi}$ when $t$ tends to infinity.
Therefore, every closed-loop solution tends to the equilibrium set ${\mathcal{U}}_Q$  when $t$ tends to infinity. Next, to prove items \ref{leaderless_uss} and \ref{leaderless_ss}, we derive the Jacobian matrix of the closed-loop dynamics \eqref{rel_quat_component}, \eqref{eq:agent_auxiliary_vector_part}, and 
\eqref{eq:stacked_omega}
evaluated at an undesired equilibrium, and show that it possesses at least one eigenvalue with positive real part.
Let $x_Q^* \in \bar{\Gamma}_Q$ be an arbitrary undesired equilibrium. For each edge $k\in\mathcal{M}$ and each agent $i\in\mathcal{V}$, define $\delta\tilde Q_k := (\tilde Q_k^*)^{-1}\odot\tilde Q_k = [\delta\rho_k^\top,\delta\eta_k]^\top$ and $\delta\tilde P_i := (\tilde P_i^*)^{-1}\odot\tilde P_i = [\delta\tilde p_i^\top,\delta\tilde\epsilon_i]^\top$, where 
$\tilde{Q}_k^*\in\{\mathcal{Q}_I\}$
 for $k\in \mathcal{M}^I$, $\tilde{Q}_k^*\in\mathcal{Q}^\rho_\pi$ for $k\in \mathcal{M}^{\pi}$
and $\tilde{P}^*_i\in\{\mathcal{Q}_I\}$
for $i\in \mathcal{V}^{I}$, $\tilde{P}^*_i\in\mathcal{Q}_\pi^\gamma$ for $i\in\mathcal{V}^{\pi}$.
Since $\tilde{Q}_k^*$ and $\tilde{P}_i^*$ are constant, one has
$\delta\dot{\tilde{Q}}_k=\frac{1}{2}\delta\tilde{Q}_k\odot\bar{\omega}_k$
and
$\delta\dot{\tilde{P}}_i=\frac{1}{2}\delta\tilde{P}_i\odot\bar{\Omega}_i$,
where $\bar{\omega}_k=[\tilde{\omega}_k^\top,0]^\top$ and
$\bar{\Omega}_i=[\Omega_i^\top,0]^\top$.
In a neighborhood of $\tilde{Q}_k^*$ and $\tilde{P}_i^*$, one has $\delta\eta_k\approx1$ and $\delta\tilde{\epsilon}_i\approx1$. 
Using $\tilde{Q}_k=\tilde{Q}_k^*\odot\delta\tilde{Q}_k$ together with the local approximation $\delta\eta_k\approx1$, one obtains $\sigma_k\approx B_k^*\delta\rho_k$, where $B_k^*=2W_{\rho}$ for $k\in\mathcal{M}^I$, and $B_k^*=-2(\mu I_3+u^{\times}W_{\rho}u^{\times})$ for $k\in\mathcal{M}^{\pi}$. Similarly, using $\tilde{P}_i=\tilde{P}_i^*\odot\delta\tilde{P}_i$ together with $\delta\tilde{\epsilon}_i\approx1$ yields $\psi_i\approx D_i^*\delta\tilde{p}_i$, where $D_i^*=2W_{\gamma}$ for $i\in\mathcal{V}^I$, and $D_i^*=-2(\mu I_3+u^{\times}W_{\gamma}u^{\times})$ for $i\in\mathcal{V}^{\pi}$.
Letting $\delta\rho:=[\delta\rho_1^\top\delta\rho_2^\top\cdots\delta\rho_M^\top]^{\top}\in\mathbb{R}^{3M}$ and $\delta\tilde{p}:=[\delta\tilde{p}_1^\top\cdots\delta\tilde{p}_N^\top]^{\top}\in\mathbb{R}^{3N}$, the first-order approximation of the closed-loop system about $x_Q^*$ is given by
\begin{align}
\delta\dot{\rho}
&=-\frac{1}{2}C^\top\omega,
\label{li1}\\
\delta\dot{\tilde{p}}
&=-\frac{1}{2}\left(\mathbf{D}^*\delta\tilde{p}
+\mathbf{R}^*\omega\right),\label{leadeles_at}
\\
J\dot{\omega}
&= k_Q C\mathbf{B}^*\delta\rho
+k_P\bar{C}\delta\tilde{p},
\label{eq:omega_lin_C2}
\end{align}
where $\bar{C}=(\mathbf{R}^*)^\top\mathbf{D}^*$,  $C:=(\mathbf{R}^*)^\top\bar{\mathbf{H}}^*$, with $\mathbf{R}^*:=\operatorname{diag}(R_1^*,\ldots,R_N^*)$, and $\bar{\mathbf{H}}^*$ denotes $\bar{\mathbf{H}}$ evaluated at $\tilde{R}_k^*=\mathcal{R}(\tilde{Q}_k^*)$. Moreover, $\mathbf{B}^*:=\operatorname{diag}(B_1^*,B_2^*,\ldots,B_M^*)$ and $\mathbf{D}^*:=\operatorname{diag}(D_1^*,D_2^*,\ldots,D_N^*)$.
It follows from \eqref{li1}--\eqref{eq:omega_lin_C2} that the Jacobian matrix of the closed-loop dynamics, evaluated at an undesired equilibrium, is given by 
\begin{align}
\bar{\mathcal{J}}^{*}
=\begin{bmatrix}
0&0&-\dfrac{1}{2}C^\top\\
0&-\dfrac{1}{2}\mathbf{D}^{*}
&-\dfrac{1}{2}\mathbf{R}^{*}\\
k_QJ^{-1}C\mathbf{B}^{*}& k_PJ^{-1}\bar{C}&0
\end{bmatrix}.
\label{eq:jacobian_undesired2}
\end{align}
We first show that zero is not an eigenvalue of $\bar{\mathcal{J}}^*$. Since $\mathbf{B}^*$ and $\mathbf{D}^*$ are block diagonal, their nonsingularity follows from the nonsingularity
of their diagonal blocks. For $k\in\mathcal M^I$, $B_k^*=2W_\rho$ is nonsingular since $W_\rho\succ0$, whereas for $k\in\mathcal M^\pi$, Lemma~\ref{lm3} implies that $B_k^*=\phi(W_\rho,u,\mu)$ is nonsingular. An analogous argument shows that $\mathbf D^*$ is nonsingular. Suppose, by contradiction, that zero is an eigenvalue of $\bar{\mathcal{J}}^*$ with a corresponding eigenvector $Z=\operatorname{col}(v_\rho,v_{\tilde p},v_\omega)\neq0$, where $v_\rho\in\mathbb{R}^{3M}$ and $v_{\tilde{p}},v_\omega\in\mathbb{R}^{3N}$.
 It follows that $\bar{\mathcal{J}}^*Z=0$, which, from \eqref{eq:jacobian_undesired2}, yields

\begin{subequations}
\begin{align}
C^\top v_\omega&=0
\label{eq1}\\
\mathbf D^*v_{\tilde p}+\mathbf R^*v_\omega&=0
\label{eq2}\\
k_QJ^{-1}C\mathbf B^*v_\rho
+k_PJ^{-1}\bar C v_{\tilde p}&=0.
\label{eq3}
\end{align}
\end{subequations} 
Using $\bar{C}= (\mathbf{R}^*)^\top\mathbf{D}^*$, it follows from \eqref{eq2} that $\bar{C}v_{\tilde p}=-v_\omega$. Substituting this relation into \eqref{eq3} and multiplying by $J$
yields $k_QC\mathbf{B}^*v_\rho-k_Pv_\omega=0$. Premultiplying this equation by $C^\top$ and using \eqref{eq1} yields $C^\top C\mathbf{B}^*v_\rho=0.$ Since $C$ has full column rank, $C^\top C$ is nonsingular.
Together with the nonsingularity of $\mathbf{B}^*$, this implies
$v_\rho=0$, and consequently $v_\omega=0$. Furthermore, \eqref{eq2}
and the nonsingularity of $\mathbf{D}^*$ imply
$v_{\tilde p}=0$, contradicting $Z\neq0$. Therefore, zero is not an
eigenvalue of $\bar{\mathcal{J}}^*$.
Next, we show that $\bar{\mathcal{J}}^*$ has at least one eigenvalue with positive real part. Let $\lambda\in\mathbb{C}$ and
$Z=\operatorname{col}(v_\rho,v_{\tilde p},v_\omega)\neq0$ satisfy $\bar{\mathcal{J}}^*Z=\lambda Z$,
where $v_\rho\in\mathbb{C}^{3M}$ and
$v_{\tilde p},v_\omega\in\mathbb{C}^{3N}$. It follows from \eqref{eq:jacobian_undesired2} that
\begin{subequations}
\begin{align}
\lambda v_\rho&=-\frac{1}{2}C^\top v_\omega\label{eq:eig_first}\\
\lambda v_{\tilde{p}}&=-\frac{1}{2}\mathbf{D}^*v_{\tilde{p}}-\frac{1}{2}\mathbf R^*v_\omega\label{eig_secondp}\\
\lambda v_\omega&=k_QJ^{-1}C\mathbf B^*v_\rho+k_PJ^{-1}\bar{C} v_{\tilde p}.
\label{eq:eig}
\end{align}
\end{subequations}
Since $\bar{\mathcal{J}}^*$ is nonsingular, one has $\lambda\neq0.$ Hence, from \eqref{eq:eig_first} and \eqref{eig_secondp},
 it follows that $v_\rho=-\frac{1}{2\lambda}C^\top v_\omega$ and $v_{\tilde{p}}=-(2\lambda I+\mathbf {D}^*)^{-1}\mathbf{R}^*v_\omega,$  for any $\lambda$ satisfying $\det(2\lambda I+\mathbf{D}^*)\neq0$. Substituting these expressions into \eqref{eq:eig} and multiplying the resulting equation by $2\lambda J$ yields $P(\lambda)v_\omega=0$, where 
\begin{align}
P(\lambda):=\lambda^2A+\bar{B}
+\lambda\bar{D}(\lambda I+\bar{D})^{-1},
\label{ss3}
\end{align}
with $A:=\frac{1}{k_P}J,$ $\bar{D}:=\frac{1}{2}(\mathbf{R}^*)^\top\mathbf{D}^*\mathbf{R}^*,$
$\bar{B}:=\frac{k_Q}{2k_P}\Xi^*,$ and $\Xi^*:=C\mathbf{B}^*C^\top.$
Since $J\succ0$, $\mathbf{B}^*=(\mathbf{B}^*)^\top$, and $\mathbf{D}^*=(\mathbf{D}^*)^\top$, it follows that
$A\succ0,$ $\bar{B}=\bar{B}^\top$,  $\bar{D}=\bar{D}^\top$. Moreover, for every real $\lambda$ such that $\lambda I+\bar{D}$ is nonsingular, $P(\lambda)$ is symmetric.
Thus, to show that $\bar{\mathcal{J}}^*$ possesses an eigenvalue with positive real part, it suffices to find a real $\lambda^*>0$ such that $\lambda^*I+\bar{D}$ is nonsingular and $P(\lambda^*)$ is singular.   We first show that $P(\lambda)$ is positive definite for all sufficiently large real $\lambda>0$.
Since $\bar{D}$ is symmetric, $\bar{D}\succeq-\|\bar{D}\|I$.  Thus, for every $\lambda\ge2\|\bar D\|$,  $\lambda I+\bar{D}\succeq (\lambda-\|\bar{D}\|)I \succeq \frac{\lambda}{2}I
\succ0.$  Consequently,  $\|(\lambda I+\bar{D})^{-1}\|\le\frac{2}{\lambda}$, and hence
$\|\lambda\bar{D}(\lambda I+\bar{D})^{-1}\|
\le2\|\bar{D}\|.$ Therefore, 
\begin{align}
\lambda_{\min}(P(\lambda))\geq\lambda^2\lambda_{\min}(A)-\|\bar{B}\|-2\|\bar{D}\|.
\label{s6}
\end{align}
Since $A\succ0$, the right-hand side of \eqref{s6} tends to $+\infty$ as  $\lambda\to+\infty$. Hence, there exists $\lambda_L\geq2\|\bar D\|$ such that
\begin{align}
P(\lambda)\succ0,\qquad\forall\,\lambda\geq\lambda_L.
\label{s8}
\end{align}
We have shown that $\lambda_{\min}(P(\lambda))>0$ for all sufficiently large real $\lambda$. We now consider the different undesired equilibrium cases and show the existence of some $\lambda^*>0$ such that  $\lambda_{\min}(P(\lambda^*))=0.$ 

\noindent
\emph{Case 1:}
$\mathcal{M}^\pi\neq\varnothing$ and $\mathcal{V}^\pi=\varnothing$.
We first show that $P(\lambda)$ is defined and continuous for all
$\lambda\geq0$. Since $\mathcal{V}^\pi=\varnothing$, one has
$\mathbf D^*\succ0$, and therefore $\bar{D}\succ0.$ Hence, $\lambda I+\bar{D}\succ0$ for all $\lambda\geq0$, which implies that
$P(\lambda)$ is continuous on $[0,\infty)$. Next, we show that $P(0)$ has at least one negative eigenvalue. Since $\mathcal{M}^\pi\neq\varnothing$, Lemma~\ref{lm3} implies that $\lambda_{\min}(\mathbf{B}^*)<0$. Hence, there exists
$y\in\mathbb{R}^{3M}\setminus\{0\}$ such that $y^\top\mathbf{B}^*y<0$. Since $C$ has full column rank, $C^\top C$ is nonsingular. Choosing $x=C(C^\top C)^{-1}y$ gives $C^\top x=y$. Therefore, $x^\top\Xi^*x=(C^\top x)^\top\mathbf B^*(C^\top x)=y^\top\mathbf{B}^*y<0 $. Thuse, $\lambda_{\min}(\Xi^*)<0.$ Since $k_Q,k_P>0$, it follows that $\lambda_{\min}(\bar{B})<0.$ Moreover, $P(0)=\bar{B}$, and therefor $\lambda_{\min}(P(0))<0$, on the other hand \eqref{s8} gives $\lambda_{\min}(P(\lambda_L))>0$. Since $P(\lambda)$ is symmetric and continuous on
$[0,\lambda_L]$, the function $\lambda_{\min}(P(\lambda))$ is continuous on this interval. Hence, there exists $\lambda^*\in(0,\lambda_L)$ such that $\lambda_{\min}(P(\lambda^*))=0.$ Therefore, $P(\lambda^*)$ is singular. Since $\lambda^*>0$ and,
in this case, $\bar D\succ0$, one has $\lambda^*I+\bar{D}\succ0$. Hence, by the preceding eigenvalue relations, $\lambda^*$ is an eigenvalue of $\bar{\mathcal{J}}^*$ with positive real part.

\noindent
{Cases (2)}: ($\mathcal{V}^{\pi}\neq\varnothing$, and $\mathcal{M}^{\pi}=\varnothing$) or ($\mathcal{V}^{\pi}\neq\varnothing$ and $\mathcal{M}^{\pi}\neq\varnothing$). 
Since $\mathcal{V}^\pi\neq\varnothing$, Lemma~\ref{lm3} implies that $\lambda_{\min}(\mathbf{D}^*)<0.$ Then it follows that  $\lambda_{\min} (\bar{D})<0$. 
Define
$\underline d:=\lambda_{\min}(\bar{D})<0$, and
$\lambda_p:=-\underline{d}>0$,
and let $u_d$ be the corresponding unit eigenvector, \textit{i.e.,} $\bar{D}u_d=\underline {d}u_d$. For every $\lambda>\lambda_p$,
$\lambda_{\min}(\lambda I+\bar D)>0$. Hence,
$\lambda I+\bar {D}\succ0$, for all $\lambda>\lambda_p$  and $P(\lambda)$ is continuous on  $(\lambda_p,\infty)$. We now show that $P(\lambda)$ has a negative eigenvalue.  Since $\bar{D}u_d=\underline{d}u_d$, for $\lambda>\lambda_p$, one has
$(\lambda I+\bar{D})^{-1}u_d
=\frac{1}{\lambda+\underline{d}}u_d$,
and hence $\bar{D}(\lambda I+\bar{D})^{-1}u_d=\frac{\underline{d}}{\lambda+\underline{d}}u_d$.
Therefore, the quadratic form of $P(\lambda)$ along $u_d$ is
\begin{align}
u_d^\top P(\lambda)u_d&=\lambda^2u_d^\top Au_d+u_d^\top\bar{B}u_d+\frac{\lambda\underline{d}}{\lambda+\underline{d}}.
\label{s10}
\end{align}
As $\lambda\to\lambda_p^+$, one has $\lambda+\underline{d}\to0^+$ and $\lambda\underline{d}\to-\underline{d}^2<0$. Hence, $\frac{\lambda\underline{d}}{\lambda+\underline{d}}\to-\infty$. Since the first two terms in \eqref{s10} remain finite, $u_d^\top P(\lambda)u_d\to-\infty$  as $\lambda\to\lambda_p^+$.
Hence, there exists $\lambda_1>\lambda_p$ such that
$u_d^\top P(\lambda_1)u_d<0.$ By \eqref{s8}, $\lambda_{\min}(P(\lambda_L))>0$. Since $P(\lambda)$ is symmetric and continuous on $[\lambda_1,\lambda_L]$, 
$\lambda_{\min}(P(\lambda))$ is continuous on this interval.
Therefore, there exists $\lambda^*\in(\lambda_1,\lambda_L)$
such that $\lambda_{\min}(P(\lambda^*))=0.$ Hence,  $P(\lambda^*)$ is singular. Moreover, $\lambda^*>\lambda_1>\lambda_p>0$ and
$\lambda^*I+\bar{D}\succ0.$ Therefore, by the preceding eigenvalue relations, $\lambda^*$ is a positive real eigenvalue of $\bar{\mathcal{J}}^*$.
Consequently, the undesired equilibria in $\bar{\Gamma}_Q$ are unstable. Since the Jacobian matrix $\mathcal{J}^*$ has at least one eigenvalue with positive real part, one can conclude, by virtue of the center manifold theorem \cite{perko2013differential}, that the stable manifold associated with the undesired equilibria in $\bar{\Gamma}_Q$ has zero Lebesgue measure.
\\
\section{Velocity-free Leader-Follower Attitude Synchronization} 
\label{se6}
\mouaad{
}This section addresses the velocity-free leader--follower synchronization problem formulated in Problem~2. Building upon the leaderless design developed in the previous section, an additional term is introduced for agent~1. The proposed distributed control torque is given by
\begin{equation}\label{lt}
\begin{aligned}
\tau_i=&k_{Q}\sum_{j\in\mathcal{N}_i}\sum^{m}_{\ell=1}c_{\rho\ell}( b^{\ell}_{j}\times b^{\ell}_{i})+{k}_{P}\sum^{m}_{\ell=1}c_{\gamma\ell}(\tilde{b}^{\ell}_{i}\times b^{\ell}_{i})\\&+{k}_{i}\sum^{m}_{\ell=1}c_{d\ell}(b^{\ell}_{d}\times b^{\ell}_{i}),
\end{aligned}
\end{equation}
where $c_{d\ell}>0$, $k_1>0$ and $k_i=0$ for all $i\in\mathcal{V}\setminus\{1\}$.
The first two terms correspond to the velocity-free leaderless design,
while the last term incorporates the prescribed reference attitude through agent~1. Since the interaction graph is connected, convergence of the neighboring relative attitude errors to $\mathcal Q_I$, together with convergence of agent~1 to the desired attitude, ensures convergence of all agents to the prescribed attitude. Defining the attitude error of agent~1 with respect to the desired attitude as $\bar{Q}_1:=Q_d\odot Q_1^{-1}=[\bar{q}_1^\top,\bar\eta_1]^\top$, one has
 \begin{align} \label{l3}
\dot{\bar{q}}_1&=-\frac{1}{2}(\bar{\eta}_1I_3+\bar{q}^{\times}_1)R_1\omega_1\\\dot{\bar{\eta}}_1&=\frac{1}{2}\bar{q}^{\top}_1 R_1 \omega_1\nonumber.
\end{align}
We next investigate the stability properties of the closed-loop system described by \eqref{eq:followersb}, \eqref{rel_quat_component}, \eqref{eq:agent_auxiliary_vector_part}, and \eqref{l3}, under the control law \eqref{lt} and the auxiliary input \eqref{eq:auxiliary}. To this end, we introduce the following set:
\begin{align}\label{qd}
\mathcal{Q}_{\pi}^{d}:=\{[u^{\top},0]^{\top}\in\mathbb{S}^3: u\in\mathcal E(W_d)\}.
\end{align}
Moreover, let ${x}_d:=(\tilde{Q}_{1},\ldots,\tilde{Q}_{M},\tilde{P}_1,\tilde P_2,\ldots,\tilde{P}_N,\bar{Q}_1,\omega)\in\mathcal{S}_{d},$ where $\mathcal{S}_{d}:=(\mathbb{S}^3)^M\times(\mathbb{S}^3)^N\times\mathbb{S}^3\times\mathbb{R}^{3N}$. The desired equilibrium set is defined as: $\Gamma_d:=\{{x}_{d}\in\mathcal S_{d}:\tilde{Q}_k\in{\mathcal\{Q}_{I}\},\hspace{0.1cm}\tilde{P}_i\in\{\mathcal{Q}_{I}\},\hspace{0.1cm}\bar{Q}_1\in\{\mathcal{Q}_{I}\},
\omega=0_{3N},\hspace{0.1cm} \forall k\in\mathcal{M},\hspace{0.1cm} \forall i\in\mathcal{V}\}$. The undesired equilibrium set is defined as: ${\bar{\Gamma}}_{d}:=\{ x_d\in\mathcal S_{d}:
\tilde{Q}_k\in\{\mathcal{Q}_I\}, \forall k\in\mathcal{M}^{I}, 
\tilde{Q}_k\in\mathcal{Q}^{\rho}_{\pi}, \forall k\in\mathcal{M}^{\pi},\tilde{P}_i\in\{\mathcal{Q}_I\}, \forall i\in \mathcal{V}^{I}, \tilde{P}_i\in\mathcal{Q}^{\gamma}_{\pi},\forall i\in\mathcal{V}^{\pi},\bar{Q}_1\in \{\mathcal{Q}_{I}\}\cup \mathcal{Q}^{d}_{\pi},\ \omega=0_{3N}\}$, where $\mathcal{M}^I\cup\mathcal{M}^\pi=\mathcal{M}$ and $\mathcal{V}^I\cup\mathcal {V}^\pi=\mathcal V$, with $\mathcal{M}^\pi\neq\varnothing$, or $\mathcal{V}^\pi\neq\varnothing$, or $\bar{Q}_1\in\mathcal{Q}_\pi^d$. The following theorem characterizes the stability properties of the resulting closed-loop system  described by \eqref{eq:followersb}, \eqref{rel_quat_component}, \eqref{eq:agent_auxiliary_vector_part}, and \eqref{l3}, under the control law \eqref{lt} and the auxiliary input \eqref{eq:auxiliary}.
\begin{theorem}
Consider the setup of Problem 2 with the control law \eqref{lt}, together with the auxiliary system \eqref{eq:auxiliary_system_agent} and its corresponding input \eqref{eq:auxiliary}. Let Assumptions \ref{as1}-\ref{as2} hold.
Then, the following statements hold:
\begin{enumerate}[label=(\roman*),ref=\roman*]
\item  All solutions of \eqref{eq:followersa}, \eqref{rel_quat_component}, \eqref{eq:agent_auxiliary_vector_part}  and \eqref{l3} under control law \eqref{lt} and auxiliary input \eqref{eq:auxiliary} converge to the equilibrium set 
$\mathcal{U}_{d}:=\Gamma_{d}\cup \bar\Gamma_{d}$. \label{ittem1}
\item The set of all undesired equilibrium points 
$\bar\Gamma_{d}$
is unstable.\label{ittem2}
\item The desired equilibrium set $\Gamma_{d}$ is almost globally asymptotically stable.\label{ittem3}
\end{enumerate}
\end{theorem}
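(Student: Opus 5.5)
The argument follows the leaderless proof. The only change is the extra leader term, which we incorporate into the Lyapunov function and then into the Jacobian.

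\textbf{Convergence (item (\ref{ittem1})).} We augment \eqref{Lyap2} with the leader error term:
\begin{align*}
V_d(x_d):=V(x_Q)+2k_1\bar q_1^\top W_d\bar q_1 .
\end{align*}
This function is positive definite on $\mathcal S_d$ with respect to $\Gamma_d$.

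By Lemma~\ref{lm1} with $x=d$ and $y=1$, we have
\begin{align*}
\sum_\ell c_{d\ell}(b_d^\ell\times b_1^\ell)=R_1^\top\bar\sigma_1,\qquad \bar\sigma_1:=2(\bar\eta_1I_3-\bar q_1^\times)W_d\bar q_1 .
\end{align*}
Using \eqref{l3}, the derivative of the new term is $-k_1\bar\sigma_1^\top R_1\omega_1$. The extra torque contributes $\omega_1^\top k_1R_1^\top\bar\sigma_1$, so the two cancel and $\dot V_d=-k_P\|\psi\|^2$.

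The Barbalat chain is then identical to the leaderless case: $\psi\to0$, then $\dot{\tilde P}_i\to0$, then $\Omega_i\to0$, then $\omega\to0$, then $\dot\omega\to0$. Boundedness of $\dot\tau$ still holds because $\dot{\bar\sigma}_1$ is bounded. This yields
\begin{align*}
k_Q\bar{\mathbf H}\sigma+k_1 e_1\otimes\bar\sigma_1\to0 .
\end{align*}

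This is the one genuinely new point. $\bar{\mathbf H}$ is $3N\times3M$ with $M=N-1$, so $[\bar{\mathbf H},\,e_1\otimes I_3]$ is $3N\times 3N$. I will show it is nonsingular, which gives $\sigma\to0$ and $\bar\sigma_1\to0$ separately. Take $\bar{\mathbf H}x+e_1\otimes y=0$. On the tree, eliminate leaves other than agent~1 one at a time: at such a leaf the single incident edge forces its $x_k=0$ (the block is $I_3$ or $-\tilde R_k$, both invertible). Removing that leaf and repeating peels the tree down to agent~1, giving $x=0$ and then $y=0$. Lemma~\ref{lem2} then gives convergence of $\tilde Q_k$ to $\{\mathcal Q_I\}\cup\mathcal Q_\pi^\rho$ and of $\bar Q_1$ to $\{\mathcal Q_I\}\cup\mathcal Q_\pi^d$.

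\textbf{Instability and almost global stability (items (\ref{ittem2})--(\ref{ittem3})).} We linearize about $x_d^*\in\bar\Gamma_d$ with the extra state $\delta\bar q_1$. Its dynamics are $\delta\dot{\bar q}_1=-\tfrac12 R_1^*\omega_1$, and $\bar\sigma_1\approx E^*\delta\bar q_1$, where $E^*=2W_d$ or $E^*=\phi(W_d,u,\mu)$ by Lemma~\ref{lm3}.

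Stack $\hat\rho=\operatorname{col}(\delta\rho,\delta\bar q_1)$ and set
\begin{align*}
\hat C:=(\mathbf R^*)^\top[\bar{\mathbf H}^*,\,e_1\otimes I_3],\qquad \hat{\mathbf B}^*:=\operatorname{diag}(k_Q\mathbf B^*,\,k_1E^*).
\end{align*}
The Jacobian then has exactly the structure of \eqref{eq:jacobian_undesired2} with $C\mathbf B^*$ replaced by $\hat C\hat{\mathbf B}^*$. Here $\hat C$ is square and nonsingular by the peeling argument above, so it certainly has full column rank.

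The rest repeats the leaderless argument. First, the same computation shows zero is not an eigenvalue. Second, we form $P(\lambda)=\lambda^2A+\hat B+\lambda\bar D(\lambda I+\bar D)^{-1}$ with $\hat B=\tfrac1{2k_P}\hat C\hat{\mathbf B}^*\hat C^\top$. Third, $P(\lambda)\succ0$ for large $\lambda$. Finally, the case split runs as follows:
\begin{itemize}
\item If $\mathcal V^\pi=\varnothing$ and either $\mathcal M^\pi\neq\varnothing$ or $\bar Q_1^*\in\mathcal Q_\pi^d$, then $\hat{\mathbf B}^*$ has a negative eigenvalue. Surjectivity of $\hat C^\top$ makes $P(0)$ indefinite, and the intermediate value argument gives a positive real eigenvalue.
\item If $\mathcal V^\pi\neq\varnothing$, the pole argument near $\lambda_p=-\lambda_{\min}(\bar D)$ applies verbatim.
\end{itemize}
Instability plus the center/stable manifold argument give zero measure for the stable manifolds of $\bar\Gamma_d$. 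Together with stability of $\Gamma_d$ (from $\dot V_d\le0$) and item (\ref{ittem1}), this yields almost global asymptotic stability.

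The main obstacle is establishing nonsingularity of $[\bar{\mathbf H},e_1\otimes I_3]$, with the state-dependent blocks $\tilde R_k$. It is needed both for separating $\sigma$ from $\bar\sigma_1$ in the limit and for the rank property in the Jacobian. I plan to handle it with the leaf-peeling induction on the tree.
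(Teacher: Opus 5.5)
Your proposal is correct and follows the paper's proof essentially step for step. It uses the same augmented Lyapunov function and Barbalat chain, and the same augmented matrices ($\hat C$ and $\hat{\mathbf B}^*$ are exactly the paper's $\mathcal C_d^*$ and $\mathcal B_d^*$). It also uses the same case split, with the intermediate-value argument from $\lambda=0$ and the pole argument at $\lambda_p=-\lambda_{\min}(\bar D)$.

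The one real difference is that you prove nonsingularity of $[\bar{\mathbf H},\,e_1\otimes I_3]$ by a self-contained leaf-peeling induction. The paper instead obtains the equivalent fact by citing nonsingularity of the agent-1-deleted block $\bar{\mathbf H}_2$ from an external lemma. Your version has a side benefit: each peeled block is orthogonal, so it directly justifies passing from $k_Q\bar{\mathbf H}\sigma+k_1e_1\otimes\bar\sigma_1\to0$ to $\sigma\to0$ and $\bar\sigma_1\to0$, even though $\bar{\mathbf H}$ depends on the state.
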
 

\begin{proof}
Using the stacked vectors $\sigma$ and $\psi$ defined in the previous section, the control law \eqref{lt} can be expressed in the following compact form:
 \begin{equation}\label{lf_torque_compact}
\tau=k_Q\mathbf{R}^\top\bar{\mathbf{H}}\sigma+k_P\mathbf{R}^\top\psi+k_1\mathbf{R}^\top A^{\top}{\sigma}_{d1},
\end{equation}
where $A:=\begin{bmatrix}I_{\color{blue}{3}}&
0_{3\times 3(N-1)}\end{bmatrix}
\in\mathbb{R}^{3\times 3N}$ and $\sigma_{d1}:=2\left(\bar\eta_1I_3-\bar q_1^\times\right)W_d\bar{q}_1$. Consider the following Lyapunov function candidate:
\begin{align}\label{elf_lyapunov}
V({x_d}):=&2k_Q\sum_{k=1}^{M}\tilde q_k^\top W_\rho\tilde q_k+2k_P\sum_{i=1}^{N}
\tilde p_i^\top W_\gamma\tilde p_i\nonumber\\&+2k_1\bar{q}_1^\top W_d\bar{q}_1+\frac{1}{2}\omega^\top J\omega,
\end{align}
which is positive definite on $\mathcal{S}_{d}$ with respect to $\Gamma_{d}$. The time derivative of \eqref{elf_lyapunov} along
the trajectories of \eqref{rel_quat_component},
\eqref{eq:agent_auxiliary_vector_part}, \eqref{eq:stacked_omega}, and
\eqref{l3} is given by
\begin{align}\label{v2l}
\dot{V}(x_d)=&k_Q\sum_{k=1}^{M}\sigma_k^\top\tilde
\omega_k+k_P\sum_{i=1}^{N}\psi_i^\top\Omega_i
\\&\nonumber- k_1\sigma^\top_{d1} R(Q_1)\omega_1
+\omega^\top\tau.
\end{align}
From \eqref{eq:omega_relation}, \eqref{eq:Omega_agent_compact} and \eqref{lf_torque_compact}, one has 
\begin{align}\label{v3l}
\dot{V}({x}_d)=&-k_Q\sigma^{\top}\bar{\mathbf {H}}^{\top}\mathbf{R}\omega+
k_P\psi^{\top}({\mathbf{R}}\beta-{\mathbf{R}}\omega)\\&~\nonumber-k_1 (A^{\top}{\sigma}_{d1})^{\top}\mathbf{R}\omega+\omega^{\top}\tau.
\end{align}
Furthermore, substituting \eqref{beta_comp} and \eqref{lf_torque_compact} into \eqref{v3l}, one obtains
\begin{align}\label{v4l}
\dot{V}({x}_d)=-k_P\|\psi\|^2.
\end{align}
It follows from \eqref{v4l} that $V$ is non-increasing and has a finite limit. Hence, the desired equilibrium set $\Gamma_d$ is stable and $\omega$ is bounded. Following the same arguments as in the proof of Theorem~1, one can verify that $\ddot{V}$ is bounded, and hence $\dot{V}$ is uniformly continuous. By Barbalat's lemma, it follows that
$\lim_{t\to\infty}\dot{V}(t)=0$, which, in view of \eqref{v4l}, implies that
$\lim_{t\to\infty}\psi(t)=0$. Then, by \eqref{eq:psi} and Lemma~\ref{lem2}, $\tilde{P}_i(t)$ converges to the set $\{\mathcal{Q}_I\}\cup\mathcal{Q}_\pi^\gamma$ as $t\to\infty$. Moreover, $\sigma_{d1}$ and $\dot{\sigma}_{d1}$ are bounded. Since $\sigma$, $\psi$, and $\sigma_{d1}$ are bounded, one can conclude from \eqref{lf_torque_compact} that $\tau$ is bounded. Together with the boundedness of  $\omega$, this implies that $\dot{\omega}$ is
bounded. Following the same arguments used in the proof of Theorem~1, it follows that both $\dot{\Omega}_i$ and $\ddot{\tilde P}_i$ are bounded. 
Hence,
$\dot{\tilde P}_i$ is uniformly continuous, and Barbalat's lemma yields
$\lim_{t\to\infty}\dot{\tilde P}_i(t)=0$. Consequently, $\lim_{t\to\infty}\Omega_i(t)=0$ for all $i\in\mathcal V$  which implies that
$\lim_{t\to\infty}\omega(t)=0$. 
Furthermore, since $\omega$, $\dot{\omega}$, and $\dot{\tau}$ are bounded, it follows that $\ddot{\omega}$ is bounded. Hence, $\dot{\omega}$ is uniformly continuous. Since
$\lim_{t\to\infty}\omega(t)=0$, Barbalat's lemma yields
$\lim_{t\to\infty}\dot{\omega}(t)=0$.
Using $\lim_{t\to}\omega(t)=0$, $\lim_{t\to}\dot{\omega}(t)=0$, and $\lim_{t\to}\psi(t)=0$ in
\eqref{eq:stacked_omega} and \eqref{lf_torque_compact}, and
premultiplying by $\mathbf R$, yields
$ \lim_{t\to\infty}
\left(k_Q\bar{\mathbf H}\sigma+k_1A^\top\sigma_{d1}\right)=0.$
 Partitioning $\bar{\mathbf H}$ according to the block row associated with
agent~1 as
$\bar{\mathbf H}
=\begin{bmatrix}\bar{\mathbf H}_1^\top&
\bar{\mathbf H}_2^\top\end{bmatrix}^{\top}$
gives
\begin{align}
\lim_{t\to\infty} (k_Q\bar{\mathbf H}_1\sigma(t)+k_1\sigma_{d1}(t)) &=0,\label{eq:lf_limit_first}\\ \lim_{t\to\infty}k_Q\bar{\mathbf H}_2\sigma(t)
&=0. \label{eq:lf_limit_second}
\end{align}
By \cite[Lemma~2]{li2026leader}, $\bar{\mathbf H}_2$ is nonsingular,
and hence $\lim_{t\to\infty}\sigma(t)=0$.
Substituting this result into \eqref{eq:lf_limit_first} yields
$\lim_{t\to\infty}\sigma_{d1}(t)=0$.
Finally, Lemma~\ref{lem2} implies that $\tilde Q_k(t)$ and
$\bar Q_1(t)$ converge, respectively, to the sets
$\{\mathcal{Q}_I\}\cup\mathcal Q_\pi^\rho$ and
$\{\mathcal{Q}_I\}\cup\mathcal Q_\pi^d$.
Therefore, every closed-loop solution converges to the equilibrium set
$\mathcal U_d$, this proves item~\eqref{ittem1}.
\\
To prove item~\eqref{ittem2} and item~\eqref{ittem3}, we analyze the linearization of the closed-loop system at the undesired equilibria in $\bar{\Gamma}_d$ and show that the corresponding Jacobian matrix has an eigenvalue with positive real part. Let $x^*_d\in\bar{\Gamma}_d$ be an arbitrary undesired equilibrium. Define 
$\delta\bar{Q}_1:=(\bar{Q}_1^*)^{-1}\odot\bar{Q}_1
=[\delta\bar{q}_1^\top,\delta\bar\eta_1]^\top,$ where $\bar{Q}_1^*\in\{\mathcal{Q}_I\}$
$\cup\mathcal{Q}_\pi^d$. Since $\bar{Q}_1^*$ is constant,  $\delta\dot{\bar{Q}}_1=-\frac{1}{2}\delta\bar{Q}_1\odot\bar{\omega}_d$, where $\bar{\omega}_d=[(R_1\omega_1)^{\top},0]^{\top}$. In a neighborhood of $x_d^*$, $\delta\bar\eta_1\approx 1$. Using $\bar{Q}_1=\bar{Q}_1^*\odot\delta\bar{Q}_1$, one obtains $\sigma_{d1}\approx F_d^*\delta\bar{q}_1$, where  
$F_d^*=2W_d$ for $\bar{Q}_1^*\in\{\mathcal{Q}_I\}$ and $F_d^*=-2\left(\mu I_3+u^\times W_du^\times\right)$ for $\bar{Q}_1^*
\in\mathcal{Q}_\pi^d$. 
Using $\sigma\approx\mathbf B^*\delta\rho$, $\psi\approx\mathbf D^*\delta\tilde p$, and
$\sigma_{d1}\approx F_d^*\delta\bar q_1$, the additional
leader--follower first-order approximation is given by \begin{align}\label{eq:reference_perturbation_linearized}
\delta\dot{\bar{q}}_1&=-\frac{1}{2}A{\mathbf{R}}^*\omega\\
 J\dot\omega&= k_QC\mathbf{B}^*\delta\rho+k_{P}\bar{C}\delta{\tilde{p}}+k_1{C}_d\delta{\bar{q}}_1\label{eq:omega_lin_C3}
\end{align} 
where $C$ and $\bar{C}$ are defined as in the proof of Theorem~1, and $C_d:=(\mathbf{R}^*)^\top A^\top F_d^*$.  Combining \eqref{li1}, \eqref{leadeles_at},
\eqref{eq:reference_perturbation_linearized}, and \eqref{eq:omega_lin_C3}, the linearized closed-loop system can be written as 
$\delta\dot{x}={\mathcal{J}}^{*}_d\delta{x}$, with 
$\delta{x}:=\text{col}(\delta\rho,\delta\tilde{p},\delta\bar{q}_1,\omega)$,
where
\begin{equation}\label{jacobian_d}
\mathcal{J}_d^*=\begin{bmatrix}
0&0&0&-\dfrac{1}{2}C^\top\\0&-\dfrac{1}{2}\mathbf D^*&0&-\dfrac{1}{2}\mathbf R^*\\0&0&0&-\dfrac{1}{2}A\mathbf R^*\\k_QJ^{-1}C\mathbf B^*
&k_PJ^{-1}\bar{C}&k_1J^{-1}C_d&0 \end{bmatrix}.
\end{equation}
We first show that zero is not an eigenvalue of $\mathcal{J}_d^*$. The nonsingularity of $\mathbf{D}^*$ and  $\mathbf{B}^*$ has already been established in the proof of items \ref{leaderless_uss} in Theorem 1. Similarly, $F_d^*$  is nonsingular.
Suppose, by contradiction, that $\lambda=0$ is an eigenvalue of $\mathcal{J}_d^*$ with a  corresponding nonzero eigenvector
$\bar{Z}:=\text{col} (v_\rho,v_{\tilde{p}},v_d,v_\omega)
\neq 0,$ where $v_\rho\in\mathbb{R}^{3M}$, $v_{\tilde{p}},v_\omega\in\mathbb{R}^{3N}$,  and $v_d\in\mathbb{R}^3$. Then $\mathcal{J}_d^*\bar{Z}=0$, which, from \eqref{jacobian_d}, yields
\begin{subequations}\label{no_zero_LF}
\begin{align}
 C^\top v_\omega&=0,\label{LF1}\\
\mathbf{D}^*v_{\tilde{p}}+\mathbf{R}^*v_\omega&=0,
\label{LF2}\\
A\mathbf{R}^*v_\omega&=0,
\label{LF3}\\
k_QC\mathbf B^*v_\rho
+k_P\bar Cv_{\tilde p}
+k_1C_dv_d&=0.
\label{LF4}
\end{align}
\end{subequations} 
Peremultiplying \eqref{LF2} by $(\mathbf R^*)^\top$ gives $\bar{C}v_{\tilde p}=-v_\omega.$ Premultiplying \eqref{LF4} by $v_\omega^\top$, the first term vanishes
by \eqref{LF1}, while
$v_\omega^\top C_dv_d=(A\mathbf R^*v_\omega)^\top F_d^*v_d=0$
by \eqref{LF3}. Hence, 
$-k_P\|v_\omega\|^2=0,$ 
and therefore $v_\omega=0$. Equation~\eqref{LF2}, together with the nonsingularity of $\mathbf D^*$, gives $v_{\tilde{p}}=0$. Consequently, \eqref{LF4} reduces to  $k_QC\mathbf B^*v_\rho+k_1C_dv_d=0.$  Premultiplying by $\mathbf R^*$ and using
$C=(\mathbf R^*)^\top\bar{\mathbf H}^*$, $C_d=(\mathbf R^*)^\top A^\top F_d^*$, and
$\bar{\mathbf H}^* =\begin{bmatrix} (\bar{\mathbf H}_1^*)^\top& (\bar{\mathbf H}_2^*)^\top \end{bmatrix}^{\top},$ yields  $k_Q \begin{bmatrix}
\bar{\mathbf H}_1^*\mathbf B^*\\ \bar{\mathbf H}_2^*\mathbf B^* \end{bmatrix} v_\rho +k_1 \begin{bmatrix} F_d^*\\ 0_{3(N-1)\times3} \end{bmatrix}
v_d=0.$  Thus, 
\begin{subequations}\label{eq:partitioned_zero_relation}
\begin{align} k_Q\bar{\mathbf{H}}_1^*\mathbf B^*v_\rho +k_1F_d^*v_d&=0, \label{non1}\\ k_Q\bar{\mathbf H}_2^*\mathbf{B}^*v_\rho&=0. \label{non2}
\end{align}
\end{subequations}
Since $\bar{\mathbf H}_2^*$ and $\mathbf B^*$ are nonsingular, \eqref{non2} implies $v_\rho=0$. Equation~\eqref{non1} and the nonsingularity of $F_d^*$ then imply $v_d=0$. Hence, $\bar Z=0$, which contradicts the assumption that $\bar{Z}$ is a
nonzero eigenvector. Therefore, zero is not an eigenvalue of $\mathcal{J}_d^*$.  It remains to show that $\mathcal{J}_d^*$ has at least one eigenvalue with positive real part.  To do this we define $\delta{z}_d:=\text{col}(\delta\rho,\delta\bar{q}_1)\in\mathbb{R}^{3N}$ and  $\mathcal{C}_d^*:=\begin{bmatrix}
C&(A{\mathbf{R}}^*)^\top\end{bmatrix},
\mathcal{B}_d^*:=\text{diag}
\left(k_Q\mathbf{B}^*,k_1{F}_d^*
\right)$. Using \eqref{li1} and \eqref{eq:reference_perturbation_linearized},
the corresponding linearized dynamics are written as
\begin{align}\label{eq:relative_and_leader}
\delta\dot{z}_d&=-\frac{1}{2} (\mathcal{C}_d^*)^\top\omega,
\\\label{eq:augmented_w}
J\dot\omega&=\mathcal{C}_d^*\mathcal{B}_d^*\delta z_d+k_P\bar{C}\delta\tilde{p}.
\end{align}
Proceeding as in the proof of item \ref{leaderless_uss} of Theorem 1, let $\lambda\in\mathbb{C}$ be an eigenvalue of $\mathcal{J}_d^*$ with corresponding nonzero eigenvector  $\bar{Z}:=\operatorname{col}(v_\rho,v_{\tilde p},v_d,v_\omega),$  where $v_\rho\in\mathbb C^{3M}$, $v_{\tilde p},v_\omega\in\mathbb C^{3N}$, and $v_d\in\mathbb C^3$. Define
$v_z:=\text{col}(v_\rho,v_d)\in\mathbb C^{3N}.$
Then, from \eqref{eq:relative_and_leader}, \eqref{eq:augmented_w}, and \eqref{leadeles_at},
\begin{align}
\lambda{v}_z&=-\frac{1}{2}(\mathcal{C}_d^*)^\top v_\omega, \label{lf_ev1}\\\lambda v_{\tilde{p}}&=-\frac{1}{2}(\mathbf D^*v_{\tilde{p}}+\mathbf{R}^*v_\omega),\label{lf_ev2}\\
\lambda Jv_\omega&=\mathcal{C}_d^*\mathcal{B}_d^*v_z
+k_P\bar{C}v_{\tilde{p}}.\label{lf_ev3}
\end{align}
Since $\mathcal J_d^*$ is nonsingular, $\lambda\neq0$. For $\det(2\lambda I+\mathbf D^*)\neq0$, \eqref{lf_ev1} and \eqref{lf_ev2} give $v_z=-\frac{1}{2\lambda}(\mathcal C_d^*)^\top v_\omega,$  $v_{\tilde p}=-(2\lambda I+\mathbf D^*)^{-1}\mathbf R^*v_\omega.$ Substituting these expressions into \eqref{lf_ev3} yields $P_d(\lambda)v_\omega=0,$ where
\begin{align}\label{Pdeigenvalue}
P_d(\lambda)=\lambda^2A_d+\bar{B}_d+\lambda\bar{D}(\lambda I+\bar{D})^{-1},
\end{align}
with
$A_d=\frac{1}{k_P}J,$ $\bar{B}_d=\frac{1}{2k_P}\mathcal{C}_d^*\mathcal{B}_d^*(\mathcal{C}_d^*)^\top$, $\bar{D}=\frac{1}{2} (\mathbf{R}^*)^\top\mathbf{D}^*\mathbf{R}^*$.
Thus, it suffices to find a real $\lambda^*>0$ such that $\lambda^*I+\bar D$ is nonsingular and $P_d(\lambda^*)$ is singular. Since $A_d\succ0$ and $\bar B_d$ and $\bar D$ are symmetric, $P_d(\lambda)$ has the same structure as $P(\lambda)$ in \eqref{ss3}. In particular, by the same argument leading to \eqref{s8}, $P_d(\lambda)\succ0$ for all sufficiently large real $\lambda>0$.
The undesired equilibria are divided into the following cases:
\emph{Case~(1):} 
 $\mathcal V^\pi=\varnothing$ and ($\mathcal M^\pi\neq\varnothing$ or $\bar Q_1^*\in\mathcal{Q}_\pi^d$). 
\emph{Case~(2):} $\mathcal V^\pi\neq\varnothing$ and $\mathcal M^\pi=\varnothing$. \emph{Case~(3):} $\mathcal V^\pi\neq\varnothing$ and $\mathcal M^\pi\neq\varnothing$. In Cases~(2) and~(3), $\bar Q_1^*$ may belong to either $\mathcal Q_I$ or $\mathcal Q_\pi^d$. Since the arguments for Cases~(2) and~(3) are identical to those in the leaderless case, only Case~(1) requires further consideration. 
\noindent
\emph{Case (1):}
Since $\mathcal V^\pi=\varnothing$, 
 $\tilde{P}^*_i\in \{\mathcal{Q}_I\}$,  for $i\in\mathcal V$, one has $\mathbf D^*\succ0$. Moreover, since $\mathcal M^\pi\neq\varnothing$ or $\bar Q_1^*\in\mathcal{Q}_\pi^d$, at least one of $\mathbf B^*$ and $F_d^*$ has a negative eigenvalue. Therefore, $\mathcal{B}_d^*$ has at least one negative eigenvalue. We next show that $\mathcal{C}_d^*$ is nonsingular. Using $C=(\mathbf{R}^*)^\top\bar{\mathbf {H}}^*$ gives  $\mathcal{C}_d^*=(\mathbf{R}^*)^\top\begin{bmatrix} \bar{\mathbf{H}}^*&A^\top\end{bmatrix}.$  Since $\mathbf R^*$ is nonsingular, it suffices to show that $\begin{bmatrix}\bar{\mathbf{H}}^*&A^\top\end{bmatrix}$ is nonsingular. Suppose that $\begin{bmatrix}\bar{\mathbf{H}}^*&A^\top\end{bmatrix} \begin{bmatrix} x\\y \end{bmatrix}=0,$ where $x\in\mathbb R^{3(N-1)}$ and $y\in\mathbb R^3$. Partition $\bar{\mathbf H}^*=\begin{bmatrix}(\bar{\mathbf H}_1^*)^\top&(\bar{\mathbf H}_2^*)^\top\end{bmatrix}^{\top},$ where $\bar{\mathbf H}_1^*\in\mathbb R^{3\times3(N-1)}$ and $\bar{\mathbf H}_2^*\in\mathbb R^{3(N-1)\times3(N-1)}$. It follows that
$\bar{\mathbf{H}}_1^*x+y=0,$  $\bar{\mathbf{H}}_2^*x=0.$ By \cite[Lemma~2]{li2026leader}, $\bar{\mathbf H}_2^*$ is nonsingular. Hence, $x=0$, which also gives $y=0$. Therefore, $\mathcal C_d^*$ is nonsingular.  Since $\mathcal B_d^*=(\mathcal B_d^*)^\top$ and $\mathcal C_d^*$ is nonsingular, Sylvester's law of inertia implies that $\bar B_d$ and $\mathcal B_d^*$ have the same inertia. Thus,  $\lambda_{\min}(\bar B_d)<0,$ and, since $P_d(0)=\bar{B}_d$, $\lambda_{\min}\bigl(P_d(0)\bigr)<0.$ Furthermore, $\mathbf D^*\succ0$ implies $\bar D\succ0$. Hence, $\lambda I+\bar D$ is nonsingular for all $\lambda\geq0$, and $P_d(\lambda)$ is continuous on $[0,\infty)$. By the same argument leading to \eqref{s8}, $P_d(\lambda)\succ0$ for all sufficiently large real $\lambda>0$. Thus, there exists $\lambda_L>0$ such that $\lambda_{\min}\bigl(P_d(\lambda_L)\bigr)>0.$ Since $\lambda_{\min}(P_d(\lambda))$ is continuous on $[0,\lambda_L]$, there exists $\lambda^*\in(0,\lambda_L)$ such that  $\lambda_{\min}\bigl(P_d(\lambda^*)\bigr)=0.$ 
Therefore, $P_d(\lambda^*)$ is singular. Moreover, $\lambda^*I+\bar D\succ0$, and hence, by the preceding eigenvalue relations, $\lambda^*>0$ is an eigenvalue of $\mathcal J_d^*$.
Thus, every undesired equilibrium in Case~(1) is unstable. For Cases~(2) and~(3), $\mathcal V^\pi\neq\varnothing$, and Lemma~\ref{lm3} implies that $\mathbf D^*$, and consequently
$\bar D$, has at least one negative eigenvalue. The same argument
used for Cases~(2) in the proof of item \ref{leaderless_uss} of Theorem 1
yields a positive real eigenvalue of $\mathcal J_d^*$. Consequently, every equilibrium in $\bar\Gamma_d$ is unstable,
which proves item~\eqref{ittem2}. Since the Jacobian matrix $\mathcal{J}^*_d$ has at least one eigenvalue with positive real part, one can conclude, by virtue of the center manifold theorem \cite{perko2013differential}, that the stable manifold associated with the undesired equilibria in $\bar{\Gamma}_d$ has zero Lebesgue measure.
\end{proof}

\section{Numerical Simulations} \label{se7}
A network of $N=10$ agents interacting over an undirected graph depicted in   Fig.~\ref{fig:interaction_graph} is considered.
\begin{figure}[t]
    \centering
\includegraphics[width=0.75\columnwidth]{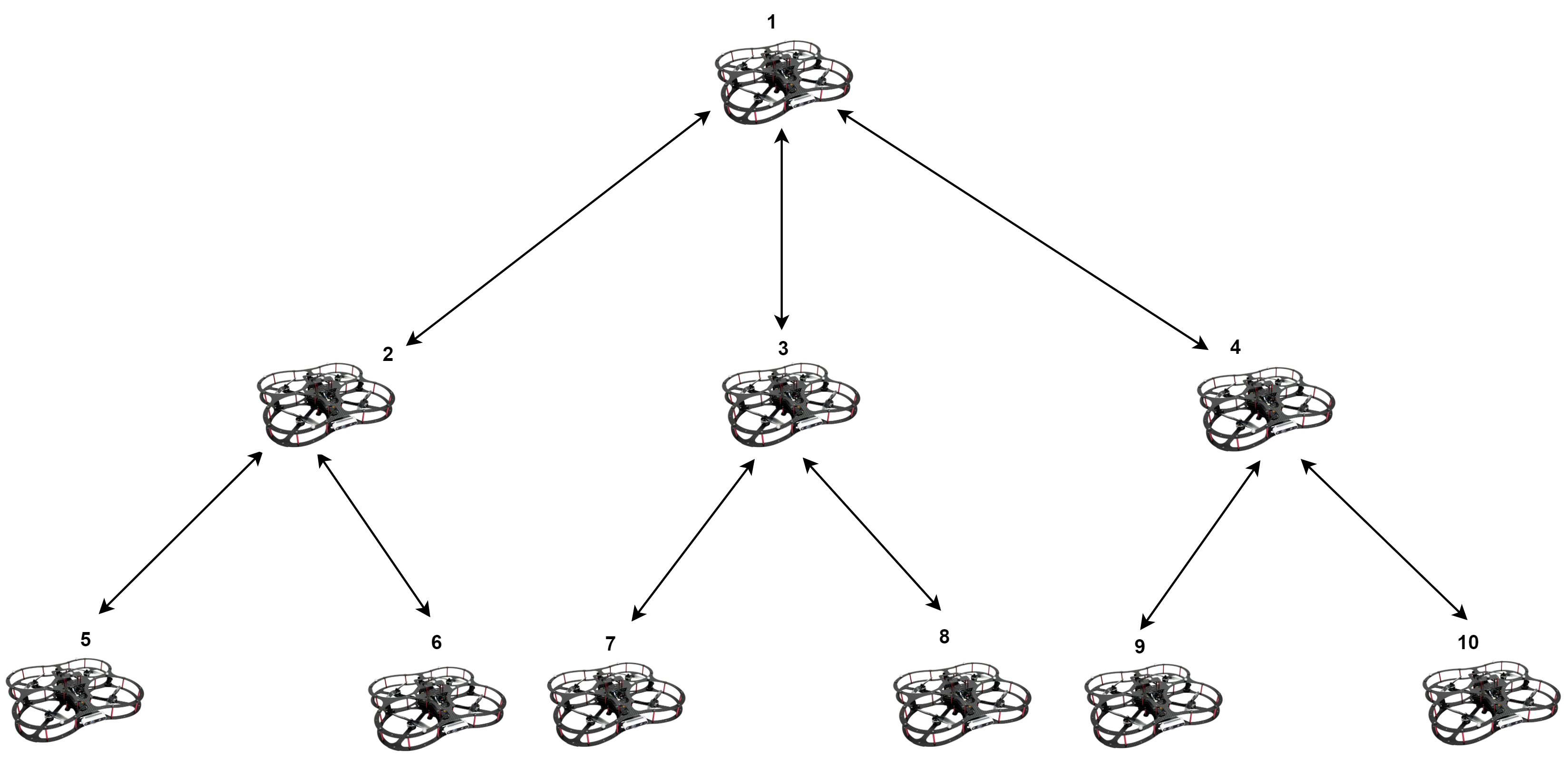}
    \caption{Communication topology of the ten-agent network.}
    \label{fig:interaction_graph}
\end{figure}
The inertia matrices are $J_i=\operatorname{diag}(0.021,0.018,0.032)$, $i\in\mathcal V$ and the inertial reference vectors are $a_1=[0~1~0]^\top$ and $a_2=({1}/\sqrt{2})[1~0~1]^\top$. The initial attitudes are given by
 $Q_i(0)= [(u_i\sin({\theta_i}/{2}))^{\top};
\cos(\theta_i/{2}) ]^{\top}$, and $u_i=\bar{u}_i/\|\bar{u}_i\|$ where
$[\theta_1,\ldots,\theta_{10}]
=[35,25,45,30,20,50,28,42,22,38]^\circ,$ and $\bar{u}_1=[1;0;1]$, $\bar{u}_2=[0;1;1],$ $\bar{u}_3=[1;-1;0],$ $\bar{u}_4=[1;1;1],$ $\bar{u}_5=[0;1;-1],$ $\bar{u}_6=[1;0;-1]$, $\bar{u}_7=[1;2;1],$ $\bar{u}_8=[2;0;1],$ $\bar{u}_9=[0;2;1],$ and $\bar{u}_{10}=[2;-1;1]$. Moreover, $P_i(0)=[0~0~0~1]^\top$, and the initial angular velocities are
\begin{tabular}{@{}l@{\hspace{0.1cm}}l@{}}
$\omega_1(0)=[0.1;-0.06;0.08]$, &$\omega_2(0)=[-0.08;0.05;0.06]$,\\$\omega_3(0)=[0.06;0.12;-0.07]$, &
$\omega_4(0)=[-0.05;-0.08;0.1]$,\\$\omega_5(0)=[0.12;-0.04;0.05]$, &$\omega_6(0)=[-0.1;0.09;-0.04]$,\\
$\omega_7(0)=[0.04;-0.11;0.08]$, &$\omega_8(0)=[0.09;0.05;-0.06]$,\\$\omega_9(0)=[-0.07;0.08;0.07]$, &
$\omega_{10}(0)=[0.05;-0.07;-0.09]$.
\end{tabular}
For the leaderless controller~\eqref{eq:distributed_torque2}, the parameters are : $c_{\rho1}=c_{\rho2}=c_{\gamma1}=c_{\gamma2}=2$, $k_Q=0.04$, and $k_P=0.1$. 
For leader--follower controller~\eqref{lt} the parameters are: $c_{\rho1}=c_{\rho2}=c_{d1}=c_{d2}=1$, $c_{\gamma1}=c_{\gamma2}=2$, $k_Q=0.08$, $k_P=0.1$, and $k_1=0.16$. 
Figs~\ref{LS_WQ} shows synchronization of the agents' attitudes and angular velocities under the leaderless controller~\eqref{eq:distributed_torque2}. Figs~\ref{LF_DW} shows that, under the leader--follower controller~\eqref{lt}, all agents synchronize to the prescribed constant attitude while their angular velocities converge to zero.
\begin{figure}[t]
\centering
\includegraphics[width=\columnwidth]{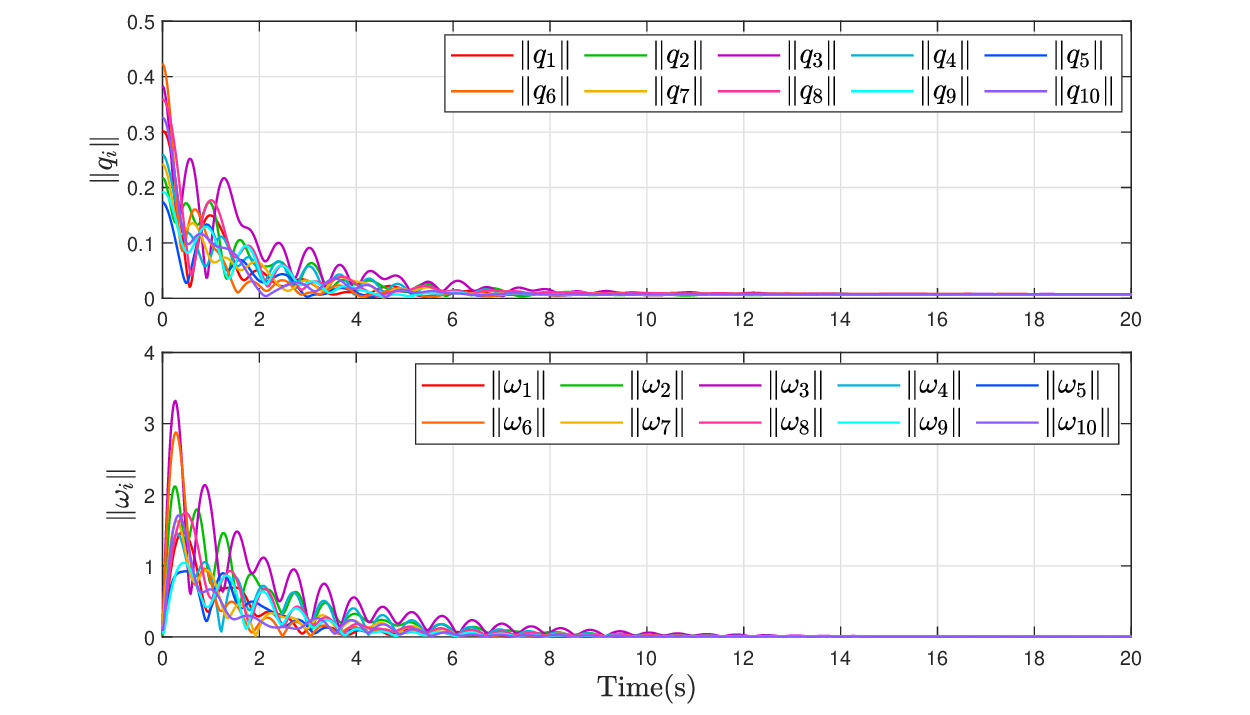}
\caption{Time evolution of the $\|q_i\|$ and $\|\omega_i\|$ under the velocity-free leaderless controller~\eqref{eq:distributed_torque2}.}
\label{LS_WQ}
\end{figure}
\begin{figure}[t]
\centering
\includegraphics[width=\columnwidth]{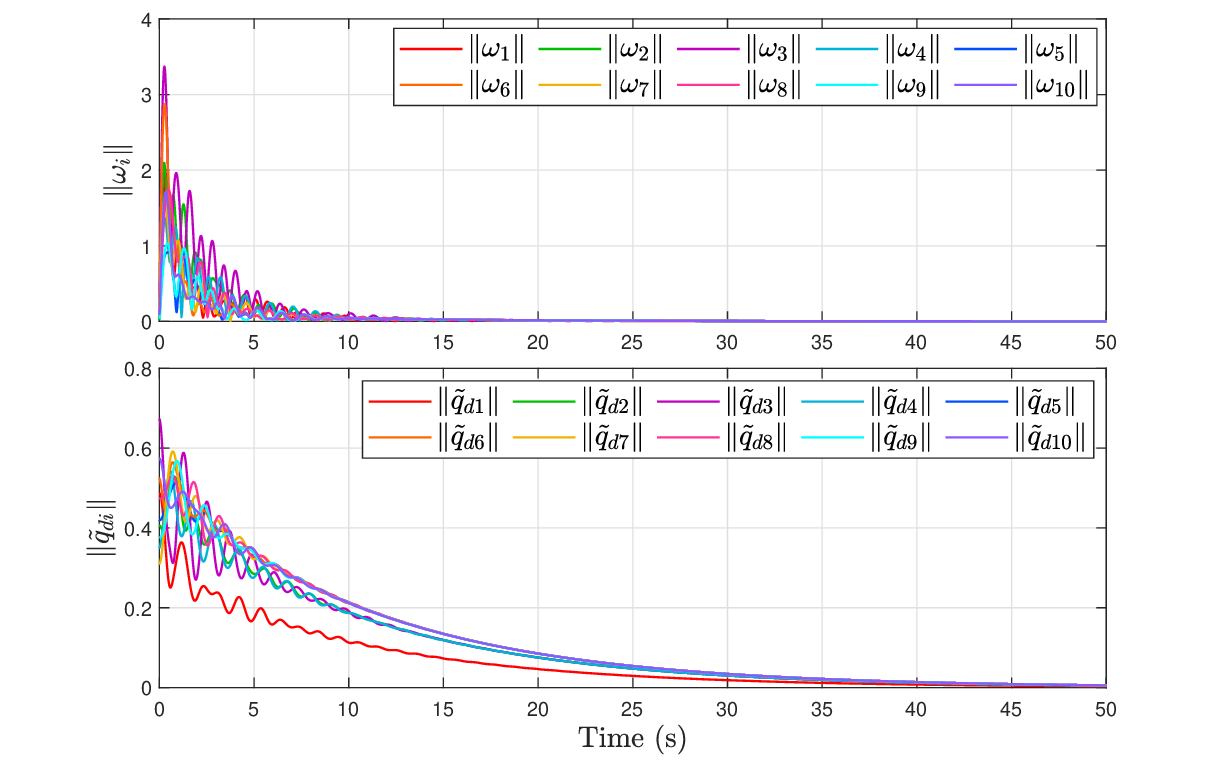}
\caption{Time evolution of the $\|\omega_i\|$ and  the relative attitude between each rigid body system and the leader under the velocity--free leader follower~\eqref{lt}}
\label{LF_DW}
\end{figure}
\section{Conclusion}\label{se8}
In this paper, distributed velocity-free control laws were proposed for leaderless and leader--follower attitude synchronization using local vector measurements over an undirected tree graph. The proposed schemes achieve almost-global asymptotic stability of the desired synchronization sets. Future work will consider communication delays, directed and switching topologies, and time-varying reference attitudes.
\bibliographystyle{IEEEtran}
\bibliography{ref}

\thispagestyle{empty} 
\end{document}